\documentclass[12pt]{article}
\usepackage{amsmath,amsthm}
\usepackage{bm,amssymb,mathtools}
\usepackage[ruled,vlined]{algorithm2e}
\usepackage{graphicx,psfrag,epsf}
\usepackage{enumerate}
\usepackage[round,authoryear]{natbib}
\usepackage{url} 
\usepackage[colorlinks,citecolor=blue,urlcolor=blue,linkcolor = blue,backref=page]{hyperref}
\usepackage{float}
\usepackage{siunitx}
\usepackage{subcaption}
\usepackage{multirow}
\usepackage{booktabs}
\usepackage{tikz}
\usepackage{longtable}
\usepackage{diagbox}
\newtheorem{theorem}{Theorem}

\title{Minimum Density Power Divergence Estimation for the Gamma Distribution with Applications to Robust Rainfall Modeling}
\author{Arnab Hazra\footnote{Email: ahazra@iitk.ac.in} \vspace{2mm} \\
Department of Statistics and Data Science, \\ Indian Institute of Technology Kanpur, Kanpur 208016, India \vspace{2mm}
}

\date{}
\begin{document}

\maketitle
\vspace{3mm}
\begin{abstract} \vspace{2mm}
\noindent 
Statistical modeling of rainfall amounts is of considerable importance in meteorology, hydrology, and agriculture. The gamma distribution remains one of the most popular choices for modeling rainfall data due to its flexibility and ability to capture the skewness of rainfall observations. Rainfall datasets often contain atypical observations due to measurement errors and extreme weather events, making maximum likelihood estimation (MLE) highly sensitive to contamination. In this paper, we develop a robust estimation framework for the two-parameter gamma distribution based on the minimum density power divergence estimator (MDPDE). Explicit estimating equations are derived, and several theoretical properties of the proposed estimators are established. In particular, closed-form expressions for the asymptotic covariance matrix are obtained, and robustness is investigated through influence function analysis and asymptotic relative efficiency. The finite-sample performance of the estimators is examined through simulation studies under both pure and contaminated gamma models. The proposed methodology is further implemented to analyze detrended areally weighted monsoon rainfall data from the 36 meteorological subdivisions of India for the period 1951--2014. The results demonstrate that the MDPDE provides a useful compromise between robustness and efficiency, yielding more stable inference than MLE in the presence of outliers while maintaining high efficiency for uncontaminated data.
\end{abstract}

\textbf{Keywords:} \emph{Density power divergence, Gamma distribution, Indian meteorological subdivisions, Influence function, Rainfall modeling, Robust estimation.}

\section{Introduction}
\label{sec:introduction}
Statistical modeling of rainfall data has long been an important area of research in agrometeorology. Rainfall amounts are typically positively skewed and often exhibit a substantial right tail, motivating the use of probability distributions supported on the positive real line. A wide variety of parametric models have been proposed in the literature, including the exponential distribution \citep{todorovic1975stochastic, burgueno2005statistical, burgueno2010statistical, hazra2018bayesian}, Gamma distribution \citep{barger1949evaluation, mooley1968application, husak2007use, krishnamoorthy2014small, martinez2019precipitation}, lognormal distribution \citep{kwaku2007characterization, mandal2015estimation, adham2016water}, Weibull distribution \citep{duan1995comparison, burgueno2005statistical, lana2017rainfall}, Pearson Type-V and Type-VI distributions \citep{hanson2008probability, khudri2013determination, mayooran2014statistical}, log-logistic distribution \citep{fitzgerald2005analysis, sharda2005modelling}, generalized exponential distribution \citep{madi2007bayesian, kazmierczak2015suitability, hazra2025minimum}, and generalized Gamma distribution \citep{sharma2010use, mandal2015estimation, mamoon2017selection}. The adequacy of these probability models has been investigated using a variety of goodness-of-fit measures and model selection criteria. Commonly employed procedures include the chi-square test \citep{barger1949evaluation, mooley1973gamma, kwaku2007characterization}, the Kolmogorov--Smirnov test \citep{sharma2010use, hazra2014modelling, al2014frequency}, the variance ratio test \citep{mooley1973gamma, hazra2017note}, and the Anderson--Darling test \citep{sharma2010use}, while information-based criteria such as the Akaike Information Criterion have also been widely used for model comparison \citep{villarini2012development}. Despite the availability of numerous competing distributions and formal model selection tools, it is common practice in the agro-meteorological literature to identify a scientifically plausible probability model and fit it to the rainfall data for subsequent inference and prediction.


The gamma distribution is arguably the most widely used probability model for rainfall amounts and wet-period precipitation data \citep{martinez2019precipitation, bhowmik2026bayesian}. Its popularity stems from its flexibility in representing a wide range of positively skewed distributions through only two parameters. Depending on the value of the shape parameter, the gamma distribution can accommodate varying degrees of skewness and dispersion commonly observed in rainfall datasets. Since rainfall amounts are nonnegative and often exhibit a long right tail, the gamma distribution provides a natural probabilistic framework for modeling monthly, seasonal, and annual precipitation totals. Consequently, it has been extensively employed in hydrological frequency analysis, drought assessment, stochastic weather generation, and climate variability studies \citep{barger1949evaluation, mooley1968application, mooley1973gamma, husak2007use, krishnamoorthy2014small}. Its mathematical tractability further facilitates parameter estimation, probabilistic forecasting, and the derivation of important hydrological quantities of practical interest. The successful application of the gamma distribution to rainfall data has been documented across diverse climatic regions and temporal scales. For example, \cite{barger1949evaluation} demonstrate its usefulness for describing precipitation characteristics, while \cite{mooley1968application} and \cite{mooley1973gamma} show that gamma-based models provide an adequate fit to Indian rainfall data in many situations. More recent studies have continued to advocate the gamma distribution owing to its ability to capture the variability inherent in precipitation processes and its compatibility with standard statistical inference procedures \citep{husak2007use, krishnamoorthy2014small, hazra2024robust, bhowmik2026bayesian}. As a result, the gamma distribution is frequently adopted as a benchmark model in rainfall studies, against which alternative probability distributions are often compared. Given its widespread acceptance and practical relevance, developing robust inferential procedures for the gamma distribution is of considerable importance, particularly in the presence of outlying observations and extreme rainfall events that may adversely affect classical likelihood-based estimation methods.


Maximum likelihood estimation (MLE) remains the most widely used method for parameter estimation across a broad range of scientific disciplines, including agro-meteorology, owing to its attractive large-sample properties such as consistency, asymptotic normality, and asymptotic efficiency. However, it is well known that the MLE can be highly sensitive to data contamination and may be severely affected even by a small number of outlying observations, whether arising from recording errors or genuine extreme events \citep{strupczewski2005robustness, strupczewski2007robustness}. Consequently, inference concerning the central part of the distribution, which is often of primary interest in rainfall modeling (except in studies focusing specifically on extremes), can become unreliable. To overcome this limitation, \cite{basu1998robust} introduce a robust estimation framework based on the minimum density power divergence estimator (MDPDE). The MDPDE is obtained by minimizing a discrepancy measure known as the density power divergence (DPD) between the assumed parametric model and the underlying data-generating distribution. For a tuning parameter $\alpha \geq 0$, the DPD between two densities $g$ and $f$ is defined as
\begin{equation}\label{eq:dpd}
    d_\alpha(g,f) = \displaystyle \left\{\begin{array}{ll}
    \displaystyle \int  \left[f^{1+\alpha}(y) - \left(1 + \alpha^{-1}\right)  f^\alpha(y) g(y) + 
\alpha^{-1} g^{1+\alpha}(y)\right] dy, & {\rm ~for} ~\alpha > 0,\\
	\displaystyle \lim_{\alpha\rightarrow 0}d_\alpha(g, f) = \int g(y) \left[\log g(y) - \log f(y) \right] dy, & {\rm ~for} ~\alpha = 0.  
\end{array}\right.
\end{equation}
For $\alpha=0$, the density power divergence reduces to the classical Kullback--Leibler divergence. Considering a parametric model with density function $f_{\bm{\theta}}$, where $\bm{\theta}$ denotes the unknown parameter vector, and let $g$ represent the true data-generating density, the minimum density power divergence estimator is obtained by minimizing $d_\alpha(g, f_{\bm{\theta}})$ with respect to $\bm{\theta}$. A notable feature of this framework is that, when $\alpha=0$, the resulting estimator coincides with the maximum likelihood estimator. Thus, the MDPDE family contains the MLE as a special case and extends it by introducing the tuning parameter $\alpha$. As $\alpha$ increases from zero, the estimator becomes progressively less sensitive to atypical observations and contamination, thereby achieving enhanced robustness at the cost of some loss in statistical efficiency under the assumed model. An important advantage of the MDPDE over several other divergence-based estimation procedures \citep{beran1977minimum, basu1994minimum} is that it does not require nonparametric density estimation or smoothing, thereby greatly simplifying implementation and computation in practice \citep{seo2017extreme}. As a result, the methodology has found successful applications in a variety of fields \citep{gajewski2004correspondence, yuan2008partial, seo2017extreme}. In the context of rainfall modeling, robust divergence-based inference has previously been explored for some commonly used rainfall distributions \citep{hazra2024robust, hazra2025minimum}. Specifically, \cite{hazra2024robust} discuss MDPDE for four probability distributions, including the two-parameter gamma distribution; however, the authors use MDPDE only for parameter estimation and do not provide a detailed analytical or numerical exposition of the theoretical properties of MDPDE. 

In this paper, we develop a comprehensive MDPDE framework for the gamma distribution. We begin by deriving the theoretical properties of the proposed estimators, including their consistency and asymptotic normality. Explicit expressions are obtained for the estimating equations, the sensitivity and variability matrices, and the asymptotic covariance matrix of the MDPDE. We further investigate the robustness properties of the proposed estimators through an influence function analysis, examining the behavior of the influence functions for both the shape parameter $a$ and the rate parameter $b$ across different choices of the tuning parameter $\alpha$. In addition, we study the efficiency--robustness trade-off by evaluating the asymptotic relative efficiencies of the MDPDEs compared with the maximum likelihood estimators. Subsequently, we investigate the effect of the tuning parameter on the fitted gamma distribution and discuss a data-driven procedure for selecting an optimal value of $\alpha$ based on minimizing the empirical Cram\'er--von Mises (CVM) distance, following the ideas of \cite{fujisawa2006robust, hazra2024robust} and \cite{hazra2025minimum}. Extensive simulation studies are conducted to assess the finite-sample performance of the proposed estimators under both pure and contaminated gamma models. The MDPDE is compared with the classical maximum likelihood estimator and several competing estimation procedures commonly used for the gamma distribution. Finally, the methodology is applied to detrended areally weighted monsoon rainfall data from the 36 meteorological subdivisions of India during the period 1951--2014. The resulting analyses illustrate the practical advantages of the proposed robust estimation framework and demonstrate its ability to yield stable, reliable inference in the presence of atypical rainfall observations. Overall, in this paper, we mostly follow the steps in the exploration of MDPDE for a generalized exponential distribution in \cite{hazra2025minimum}, while keeping the main focus on the gamma distribution.

The remainder of the paper is organized as follows. Section \ref{sec:methodology} discusses the MDPDE framework for the gamma distribution. In Section \ref{sec:properties}, we derive its theoretical properties, including the asymptotic distribution, influence function, and asymptotic relative efficiency. A simulation study is presented in Section \ref{sec:simulation} to investigate the performance of the proposed estimators under both pure and contaminated gamma models. In Section \ref{sec:data_application}, we apply the proposed methodology to the detrended areally weighted monsoon rainfall data from the 36 meteorological subdivisions of India and discuss the resulting inferences. Finally, Section \ref{sec:conclusion} concludes the paper with a summary of the main findings and directions for future research.

\section{Methodology} \label{sec:methodology}
In MDPDE, the parameter estimates are obtained by minimizing the density power divergence $d_\alpha(g,f)$ in (\ref{eq:dpd}) over the parameter space. The divergence involves a tuning parameter $\alpha\geq 0$, and for $\alpha=0$, the DPD coincides with the Kullback--Leibler divergence. For the gamma distribution, let the parameter vector be denoted by $\bm{\theta}=(a,b)' \in \Theta=\mathbb{R}^{+}\times\mathbb{R}^{+}$, where $a$ and $b$ denote the shape and rate parameters, respectively, i.e., the density function is given by
\begin{equation}\label{eq:gamma_density}
f_{\Gamma}(y;a,b) =
\frac{b^{a}}{\Gamma(a)}
y^{a-1} \exp(-by),
\qquad y>0.
\end{equation}
For parameter estimation, the DPD is computed as $d_\alpha(g,f_{\Gamma}(\cdot;\bm{\theta}))$, where $g$ denotes the true density function and $f_{\Gamma}$ is the gamma density given in \eqref{eq:gamma_density}. As noted earlier, the MLE is a special case of the MDPDE corresponding to $\alpha=0$. Since the MLE is highly sensitive to outlying observations, the Kullback--Leibler divergence is often unsuitable for robust inference, motivating the use of the density power divergence. Following \cite{basu1998robust}, we define the minimum DPD functional (bivariate) $\bm{T}_\alpha(\cdot)$ by
\begin{equation}\label{eq:dpd2}
d_\alpha(g,f_{\Gamma}(\cdot;\bm{T}_\alpha(G))) = 
\min_{\bm{\theta}\in\Theta}
d_\alpha(g,f_{\Gamma}(\cdot;\bm{\theta})),
\end{equation}
where $G$ denotes the true distribution and $g$ is the density function corresponding to $G$. Thus, $\bm{T}_\alpha(G)$ is the parameter vector that provides the best gamma approximation under $G$.

In practice, however, the true density $g$ is unknown; therefore, an estimator of $g$ is ordinarily required to minimize the DPD. Unlike several alternative divergence-based estimation procedures, the DPD family proposed by \cite{basu1998robust} has the important advantage that it does not require any nonparametric smoothing for estimating $g$. This follows from the representation
\begin{eqnarray}\label{eq:dpd3}
 && d_\alpha(g,f_{\Gamma}(\cdot; \bm{\theta})) \\
\nonumber     &=&\displaystyle \left\{\begin{array}{ll}
    \displaystyle \int f^{1+\alpha}_{\Gamma}(y; \bm{\theta}) dy - \left(1 + \alpha^{-1}\right) \text{E} \left[ f^{\alpha}_{\Gamma}(Y; \bm{\theta}) \right] + 
\alpha^{-1} \text{E}\left[ g^{\alpha}(Y)\right] & {\rm if} ~\alpha > 0,\\
	\displaystyle \text{E}\left[\log g(Y)\right] - \text{E}\left[\log f_{\Gamma}(Y; \bm{\theta})\right] & {\rm if} ~\alpha = 0,  
\end{array}\right.
\end{eqnarray}
where $\text{E}[\cdot]$ denotes expectation with respect to $g$. The two quantities $\text{E}[g^\alpha(Y)]$ and $\text{E}[\log g(Y)]$ are free of $\bm{\theta}$ and hence can be ignored in the optimization problem (\ref{eq:dpd2}). The remaining expectations, namely $\text{E}[f_{\Gamma}^{\alpha}(Y;\bm{\theta})]$ and $\text{E}[\log f_{\Gamma}(Y;\bm{\theta})]$, can be estimated directly using empirical averages and therefore no nonparametric estimation of $g$ is required.

Suppose the random sample we observe is denoted by $Y_1,\ldots,Y_n$. Consequently, the MDPDE of $\bm{\theta}$ corresponding to tuning parameter $\alpha$ is 
\begin{equation}\label{eq:dpd4}
\widehat{\bm{\theta}}_\alpha
=
\arg\min_{\bm{\theta}\in\Theta}
H_{\alpha,n}(\bm{\theta}),
\end{equation}
where
$
H_{\alpha,n}(\bm{\theta}) =
n^{-1}
\sum_{i=1}^{n}
V_\alpha(\bm{\theta};Y_i),
$
with
\begin{equation}\label{eq:dpd5}
   V_\alpha(\bm{\theta}; y)  = \displaystyle \left\{\begin{array}{ll}
    \displaystyle \int f^{1+\alpha}_{\Gamma}(y; \bm{\theta}) dy - \left(1 + \alpha^{-1}\right) f^{\alpha}_{\Gamma}(y; \bm{\theta}),  & {\rm ~~if} ~\alpha > 0,\\
	\displaystyle - \log f_{\Gamma}(y; \bm{\theta}), & {\rm ~~if} ~\alpha = 0.  
\end{array}\right.
\end{equation}
For $\alpha=0$,
$$
\widehat{\bm{\theta}}_0
=
\arg\min_{\bm{\theta}\in\Theta}
\frac{1}{n}
\sum_{i=1}^{n}
\left[
-\log f_{\Gamma}(Y_i;\bm{\theta})
\right],
$$
which is identical to the maximum likelihood estimator $\widehat{\bm{\theta}}_{\rm ML}=(\widehat a_{\rm ML},\widehat b_{\rm ML})^\top$.

To implement the optimization routine in (\ref{eq:dpd4}), we first derive an explicit expression for $V_\alpha(\bm{\theta};y)$. For $\alpha=0$, it follows immediately from the gamma density. For $\alpha>0$, we obtain
\begin{eqnarray}\label{EQ:dpd5_simple}
V_\alpha(\bm{\theta};y)
&=&
\frac{
b^\alpha
\Gamma\left((1+\alpha)a-\alpha\right)
}
{
\Gamma(a)^{1+\alpha}
(1+\alpha)^{(1+\alpha)a-\alpha}
}
- \left(1+\alpha^{-1}\right)
\left(\frac{b^a}{\Gamma(a)} y^{a-1} e^{-by}\right)^\alpha;
\end{eqnarray}
here, \eqref{EQ:dpd5_simple} provides a closed-form objective function for computing the MDPDE under the gamma model. Subsequently, for convenience, we denote
\begin{equation}\label{eq:notation_r_M}
    r_{a,\alpha} := (1+\alpha)a-\alpha, \qquad M_{\bm\theta, \alpha} :=  \int
f_{\Gamma}^{1+\alpha}(y;\bm{\theta})dy = \frac{
b^\alpha
\Gamma\left(r_{a,\alpha}\right)
}
{
\Gamma(a)^{1+\alpha}
(1+\alpha)^{r_{a,\alpha}}
}.
\end{equation}
Here, naturally, $M_{\bm\theta, \alpha}$ is well-defined assuming $r_{a,\alpha} > 0$, i.e., $a > \alpha / (1+\alpha)$.

\section{Robustness and asymptotics} \label{sec:properties}
\noindent Robustness properties of the MDPDE for the gamma distribution follow directly from the fact that the estimator is an $M$-estimator, that is, it satisfies an estimating equation of the form
$\sum_{i=1}^{n}\bm{\psi}(Y_i,\bm{\theta})=\bm{0}$, where $\bm{\psi}(\cdot)$ is a vector-valued function having the same dimension as the parameter space. For the gamma distribution, the parameter vector is $\bm{\theta}=(a,b)^\top$, and hence $\bm{\psi}(\cdot)$ is two-dimensional. Unbiased estimating equations for any $\alpha\geq0$ can be obtained by differentiating $H_{\alpha,n}(\bm{\theta})$ in \eqref{eq:dpd5}. The corresponding estimating equations for $a$ and $b$ are given by
\begin{eqnarray}\label{eq:dpd6}
\nonumber
U_n(a;b)
&\equiv&
\frac{1}{n}\sum_{i=1}^{n}
u_{a;b}(Y_i)
f_\Gamma^\alpha(Y_i;\bm{\theta})
- \int u_{a;b}(y)
f_\Gamma^{1+\alpha}(y;\bm{\theta}) dy
=0,
\\
U_n(b;a)
&\equiv&
\frac{1}{n}\sum_{i=1}^{n}
u_{b;a}(Y_i)
f_\Gamma^\alpha(Y_i;\bm{\theta})
-
\int
u_{b;a}(y)
f_\Gamma^{1+\alpha}(y;\bm{\theta}) dy
=0,
\end{eqnarray}
where $u_{a;b}(y) = \frac{\partial}{\partial a} \log f_\Gamma(y;\bm{\theta})$, and
$ u_{b;a}(y) =\frac{\partial}{\partial b} \log f_\Gamma(y;\bm{\theta})$ denote the score functions. At $\alpha=0$, \eqref{eq:dpd6} reduces to the usual estimating equations associated with MLE. For any $\alpha>0$, the MDPDE produces weighted score equations with weights $f_\Gamma^\alpha(Y_i;\bm{\theta})$. Consequently, observations lying in low-density regions of the fitted gamma distribution receive smaller weights, thereby reducing their influence on the resulting parameter estimates and yielding robustness against outliers.

Further simplification of \eqref{eq:dpd6} is possible by deriving explicit expressions for the score vector and the associated integral terms. For the gamma distribution, following \eqref{eq:gamma_density}, the score vector
$\bm{u}_{\bm{\theta}}(y)=\left(u_{a;b}(y),u_{b;a}(y)\right)^\top$
is given by
\begin{equation}\label{eq:score_vector}
\bm{u}_{\bm{\theta}}(y) = (\log b-\psi(a)+\log y, ~ a/b-y)^\top,
\end{equation}
where $\psi(\cdot)$ denotes the digamma function. Using \eqref{eq:score_vector} and the notations in \eqref{eq:notation_r_M}, the estimating equations in \eqref{eq:dpd6} can be written as
\begin{eqnarray}
\nonumber
U_n(a;b) &=& \frac{1}{n} \sum_{i=1}^{n}
\left(\log b-\psi(a)+\log Y_i\right)
\left(\frac{b^a}{\Gamma(a)} Y_i^{a-1}\exp\{-bY_i\}\right)^\alpha
\\
&&
- M_{\bm \theta, \alpha}
\Big[\psi(r_{a, \alpha}) - \psi(a) -\log(1+\alpha)\Big], \\
\nonumber
U_n(b;a) &=& \frac{1}{n} \sum_{i=1}^{n}
\left(\frac{a}{b}-Y_i\right)
\left(\frac{b^a}{\Gamma(a)} Y_i^{a-1}
\exp\{-bY_i\} \right)^\alpha
- \frac{\alpha}{(1+\alpha)b} M_{\bm \theta, \alpha}.
\end{eqnarray}
The MDPDE $\widehat{\bm{\theta}}_\alpha=(\widehat a_\alpha,\widehat b_\alpha)^\top$ is obtained by solving the above system of nonlinear equations simultaneously.

\subsection{Asymptotic relative efficiency}\label{subsec:are}

\noindent We now investigate the asymptotic properties of the estimator $\widehat{\bm{\theta}}_\alpha$ defined in \eqref{eq:dpd4}. Closed-form expressions for $\widehat{\bm{\theta}}_\alpha$ are not available, even in the special case of the MLE corresponding to $\alpha=0$. Consequently, explicit expressions for the finite-sample mean and covariance matrix of $\widehat{\bm{\theta}}_\alpha$ are also unavailable. Nevertheless, the large-sample properties of the estimator can be established using the general asymptotic theory of MDPDE developed by \cite{basu1998robust}. Suppose that the gamma model is correctly specified so that the true data-generating distribution is $G=F_\Gamma(\cdot;\bm{\theta}_\ast)$,
where $\bm{\theta}_\ast=(a_\ast,b_\ast)^\top\in\Theta$ denotes the true parameter vector. Under the regularity conditions given in \cite{basu1998robust}, the estimator
$\widehat{\bm{\theta}}_\alpha~=~(\widehat a_\alpha,\widehat b_\alpha)^\top$ is a consistent estimator of $\bm{\theta}_\ast$, and the asymptotic distribution of
$\widetilde{\bm{\theta}}_{\alpha,n} =
\sqrt{n}(\widehat{\bm{\theta}}_\alpha -~\bm{\theta}_\ast)$
is bivariate normal with a mean vector
$\bm{\mu}_{\widetilde{\bm{\theta}}} = (0,0)^\top$
and covariance matrix
$\bm{\Sigma}_{\widetilde{\bm{\theta}}}
= \bm{J}_\alpha(\bm{\theta}_\ast)^{-1}
\bm{K}_\alpha(\bm{\theta}_\ast)
\bm{J}_\alpha(\bm{\theta}_\ast)^{-1}$,
where
\begin{eqnarray}\label{eq:J_K_xi}
&&
\bm{J}_\alpha(\bm{\theta}) = \int \bm{u}_{\bm{\theta}}(y)
\bm{u}_{\bm{\theta}}(y)^\top
f_\Gamma^{1+\alpha}(y;\bm{\theta}) dy, \\
\nonumber && \bm{K}_\alpha(\bm{\theta}) = \bm{J}_{2\alpha}(\bm{\theta}) - 
\bm{\xi}_\alpha(\bm{\theta}) \bm{\xi}_\alpha(\bm{\theta})^\top,
~
\textrm{where} ~
\bm{\xi}_\alpha(\bm{\theta}) = \int
\bm{u}_{\bm{\theta}}(y) f_\Gamma^{1+\alpha}(y;\bm{\theta})dy.
\end{eqnarray}
Here, $\bm{u}_{\bm{\theta}}(y)$ denotes the score vector associated with the gamma distribution as defined before. The matrices $\bm{J}_\alpha(\bm{\theta})$ and $\bm{K}_\alpha(\bm{\theta})$ are commonly referred to as the sensitivity matrix and the variability matrix, respectively, and together determine the asymptotic covariance matrix of the proposed estimator through the familiar sandwich form.

Further simplification of the quantities in \eqref{eq:J_K_xi} yields explicit expressions for the vector $\bm{\xi}_\alpha(\bm{\theta})$ and the matrix $\bm{J}_\alpha(\bm{\theta})$. Let $\bm{J}_\alpha^{(i,j)}(\bm{\theta})$ denote the $(i,j)$-th element of the symmetric matrix $\bm{J}_\alpha(\bm{\theta})$, $i,j=1,2$. Using the score vector in \eqref{eq:score_vector} together with the identity
$ M_{\bm{\theta},\alpha} = \int f_\Gamma^{1+\alpha}(y;\bm{\theta})\,dy$, closed-form expressions for $\bm{\xi}_\alpha(\bm{\theta})$ and the elements of $\bm{J}_\alpha(\bm{\theta})$ can be obtained after straightforward algebra and are 
\begin{eqnarray}\label{eq:J_K_xi_simple}
&&
\bm{\xi}_{\alpha}(\bm{\theta}) =
M_{\bm{\theta},\alpha} \left[\psi(r_{a,\alpha})-\psi(a)-\log(1+\alpha), ~ \dfrac{\alpha}{(1+\alpha)b} \right]^\top,
\nonumber\\
&&
\bm{J}^{(1,1)}_\alpha(\bm{\theta})
= M_{\bm{\theta},\alpha}
\left[\psi'(r_{a,\alpha}) +
\left\{\psi(r_{a,\alpha}) -\psi(a) -\log(1+\alpha) \right\}^2
\right],
\nonumber\\
&&
\bm{J}^{(1,2)}_\alpha(\bm{\theta})
=
\bm{J}^{(2,1)}_\alpha(\bm{\theta})
=
\frac{M_{\bm{\theta},\alpha}}{b(1+\alpha)}
\left[
\alpha \left\{ \psi(r_{a,\alpha}) -\psi(a) -\log(1+\alpha)
\right\} - 1 \right],
\nonumber\\
&&
\bm{J}^{(2,2)}_\alpha(\bm{\theta})
=\frac{M_{\bm\theta,\alpha}}
{(1+\alpha)^2b^2} \left\{r_{a, \alpha}+\alpha^2\right\},
\end{eqnarray}
where $\psi'(\cdot)$ denotes the trigamma function and $r_{a,\alpha}$ is as in \eqref{eq:notation_r_M}.

The elements of the asymptotic covariance matrix $\bm{\Sigma}_{\tilde{\bm{\theta}}}$ depend on the model parameters $a$ and $b$, as well as on the MDPDE tuning parameter $\alpha$. The condition for obtaining the closed-form expressions of $\bm{\xi}_\alpha(\bm{\theta})$ and $\bm{J}_\alpha(\bm{\theta})$ is $r_{a,\alpha} > 0$. Besides, $\bm{\Sigma}_{\widetilde{\bm{\theta}}}$ depends on $\bm{K}_\alpha(\bm{\theta})$, which eventually depends on $\bm{J}_{2\alpha}(\bm{\theta})$. Thus, obtaining the closed-form expression of $\bm{\Sigma}_{\widetilde{\bm{\theta}}}$ based on \eqref{eq:J_K_xi_simple} requires $r_{a,2\alpha} > 0$, i.e., $a > 2\alpha/(1+2\alpha)$. To illustrate these relationships, we plot the elements $\bm{\Sigma}^{(i,j)}_{\tilde{\bm{\theta}}}$, $i,j=1,2$, for a range of values of $a$, $b$, and $\alpha$ in Figure \ref{fig_cov_comparison}. The diagonal elements $\bm{\Sigma}^{(1,1)}_{\tilde{\bm{\theta}}}$ and $\bm{\Sigma}^{(2,2)}_{\tilde{\bm{\theta}}}$ represent the asymptotic variances of $\sqrt{n}\widehat{a}_{\alpha}$ and $\sqrt{n}\widehat{b}_{\alpha}$, respectively, while the off-diagonal element $\bm{\Sigma}^{(1,2)}_{\tilde{\bm{\theta}}}$ (equal to $\bm{\Sigma}^{(2,1)}_{\tilde{\bm{\theta}}}$) represents the asymptotic covariance between $\sqrt{n}\widehat{a}_{\alpha}$ and $\sqrt{n}\widehat{b}_{\alpha}$. We observe that the asymptotic variance of $\sqrt{n}\widehat{a}_{\alpha}$ appears to be insensitive to the rate parameter $b$ but varies considerably with the shape parameter $a$. In contrast, the asymptotic variance of $\sqrt{n}\widehat{b}_{\alpha}$ increases rapidly with $b$, while the asymptotic covariance between $\sqrt{n}\widehat{a}_{\alpha}$ and $\sqrt{n}\widehat{b}_{\alpha}$ appears to exhibit a linear increase with $b$. As expected, increasing the tuning parameter from $\alpha=0.1$ to $\alpha=0.5$ leads to a noticeable increase in all three quantities, reflecting the loss in asymptotic efficiency associated with improved robustness. Motivated by these empirical observations, we theoretically establish the following proportionality relationships.

\begin{figure}[t]
    \centering
\includegraphics[width = \linewidth]{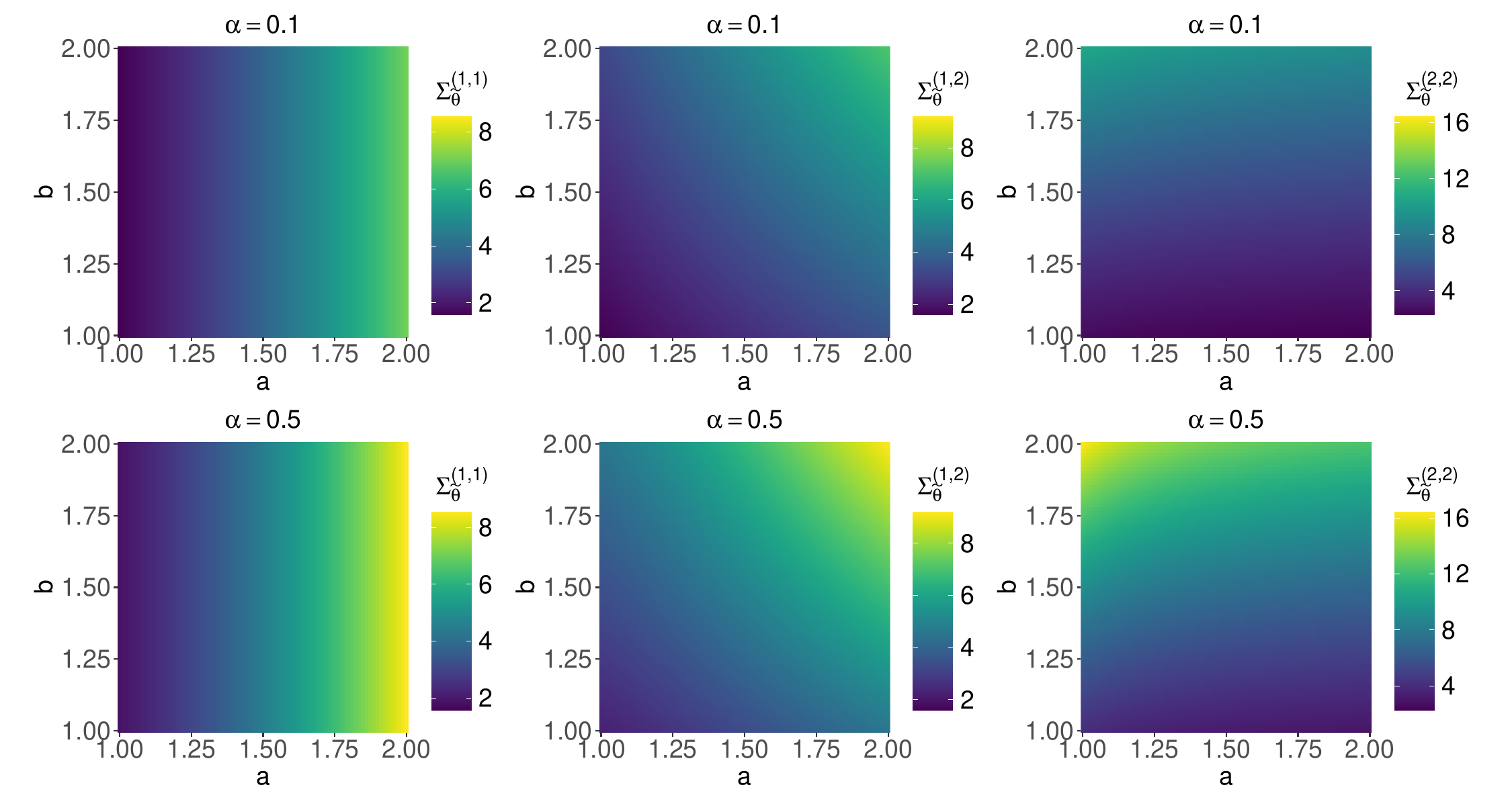}
    \caption{Variation of the elements of the asymptotic covariance matrix $\bm{\Sigma}_{\tilde{\bm{\theta}}}$ as functions of the shape parameter $a$, the rate parameter $b$, and the MDPDE tuning parameter $\alpha$. Here, $\bm{\Sigma}_{\tilde{\bm{\theta}}}^{(i,j)}$ denotes the $(i,j)$-th element of $\bm{\Sigma}_{\tilde{\bm{\theta}}}$, $i,j=1,2$.}
    \label{fig_cov_comparison}
    \vspace{-4mm}
\end{figure}

\begin{theorem}\label{thm1}
For the asymptotic covariance matrix $\bm{\Sigma}_{\widetilde{\bm{\theta}}}$ of the rescaled estimator $\widehat{\bm{\theta}}_{\alpha}$, the relationships between the elements of $\bm{\Sigma}_{\widetilde{\bm{\theta}}}$ and the rate parameter $b$ are given by: (i) $\bm{\Sigma}^{(1,1)}_{\widetilde{\bm{\theta}}}$ is independent of $b$, (ii) $\bm{\Sigma}^{(1,2)}_{\widetilde{\bm{\theta}}}\propto b$, and (iii) $\bm{\Sigma}^{(2,2)}_{\widetilde{\bm{\theta}}}\propto b^2$.
\end{theorem}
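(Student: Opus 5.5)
The plan is to track how each ingredient of the sandwich formula depends on $b$, using the closed forms in \eqref{eq:J_K_xi_simple}. From \eqref{eq:notation_r_M}, $M_{\bm\theta,\alpha}=b^\alpha c_\alpha(a)$, where $c_\alpha(a)$ depends only on $a$ and $\alpha$. The quantity $\psi(r_{a,\alpha})-\psi(a)-\log(1+\alpha)$ is free of $b$. Introduce the diagonal matrix $\bm{D}_b=\mathrm{diag}(1,b^{-1})$. Reading off \eqref{eq:J_K_xi_simple}, the entry $\bm J^{(1,1)}_\alpha$ has the form $b^\alpha\cdot(\text{function of }a,\alpha)$, the entry $\bm J^{(1,2)}_\alpha$ has the form $b^{\alpha}b^{-1}\cdot(\cdot)$, and the entry $\bm J^{(2,2)}_\alpha$ has the form $b^\alpha b^{-2}\cdot(\cdot)$. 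Hence
\[
\bm J_\alpha(\bm\theta)=b^\alpha\,\bm D_b\,\bm A_\alpha(a)\,\bm D_b,
\]
where $\bm A_\alpha(a)$ does not depend on $b$. In the same way, $\bm\xi_\alpha(\bm\theta)=b^\alpha\bm D_b\bm v_\alpha(a)$ with $\bm v_\alpha(a)$ free of $b$.

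A cleaner route to the same conclusion, which I would give as the justification, is the change of variables $y=z/b$. This gives $f_\Gamma(y;a,b)=b f_\Gamma(z;a,1)$ and $\bm u_{\bm\theta}(y)=\bm D_b\,\bm u_{(a,1)}(z)$, since $\log b+\log y=\log z$ and $a/b-y=(a-z)/b$. The factorization then follows directly from the integral definitions in \eqref{eq:J_K_xi}, without relying on algebraic details of the closed forms.

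Next, apply the same factorization with $2\alpha$ in place of $\alpha$:
\[
\bm K_\alpha=\bm J_{2\alpha}-\bm\xi_\alpha\bm\xi_\alpha^\top=b^{2\alpha}\bm D_b\left[\bm A_{2\alpha}(a)-\bm v_\alpha(a)\bm v_\alpha(a)^\top\right]\bm D_b=b^{2\alpha}\bm D_b\bm B_\alpha(a)\bm D_b.
\]
Here it is essential that both terms carry the same power $b^{2\alpha}$, so $\bm B_\alpha(a)$ is free of $b$. Then $\bm J_\alpha^{-1}=b^{-\alpha}\bm D_b^{-1}\bm A_\alpha^{-1}\bm D_b^{-1}$, and therefore
\[
\bm\Sigma_{\widetilde{\bm\theta}}=b^{-2\alpha}b^{2\alpha}\,\bm D_b^{-1}\bm A_\alpha^{-1}\bm B_\alpha\bm A_\alpha^{-1}\bm D_b^{-1}=\bm D_b^{-1}\bm C_\alpha(a)\bm D_b^{-1},
\]
with $\bm D_b^{-1}=\mathrm{diag}(1,b)$ and $\bm C_\alpha(a)$ independent of $b$.

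Reading off the entries gives $\bm\Sigma^{(1,1)}=\bm C^{(1,1)}$, which is free of $b$; $\bm\Sigma^{(1,2)}=b\,\bm C^{(1,2)}$; and $\bm\Sigma^{(2,2)}=b^2\bm C^{(2,2)}$. These are exactly statements (i)–(iii).

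The main point needing care is the bookkeeping of the powers of $b$: the powers $b^{\pm\alpha}$ must cancel exactly between $\bm J_\alpha^{-1}$ (contributing $b^{-2\alpha}$ in total) and $\bm K_\alpha$ (contributing $b^{2\alpha}$). The argument also needs invertibility of $\bm A_\alpha(a)$, which is equivalent to invertibility of $\bm J_\alpha$ and is already assumed for the asymptotic normality result. Finally, the condition $r_{a,2\alpha}>0$ must hold so that all the integrals are finite.
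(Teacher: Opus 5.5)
Your proposal is correct, and its skeleton matches the paper's proof. The paper also reads off from \eqref{eq:J_K_xi_simple} and $M_{\bm\theta,\alpha}\propto b^\alpha$ that the $(i,j)$ entry of $\bm J_\alpha$ carries $b^{\alpha+2-i-j}$ and that of $\bm K_\alpha$ carries $b^{2\alpha+2-i-j}$. It then appeals to ``straightforward algebra'' for the sandwich product.

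You differ in two refinements.
\begin{itemize}
\item You recognize that this entrywise pattern is exactly a congruence $\bm J_\alpha=b^\alpha\bm D_b\bm A_\alpha(a)\bm D_b$. Inversion and the sandwich product then collapse in one line to $\bm\Sigma_{\widetilde{\bm\theta}}=\bm D_b^{-1}\bm C_\alpha(a)\bm D_b^{-1}$. This makes explicit the step the paper leaves to the reader and shows transparently why the $b^{\pm 2\alpha}$ factors cancel, with no need to expand a $2\times 2$ inverse.
\item Your change of variables $z=by$, with $f_\Gamma(y;a,b)=b\,f_\Gamma(z;a,1)$ and $\bm u_{\bm\theta}(y)=\bm D_b\bm u_{(a,1)}(z)$, derives the factorization directly from the integral definitions \eqref{eq:J_K_xi}. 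It identifies $\bm A_\alpha(a)=\bm J_\alpha((a,1)^\top)$ and $\bm v_\alpha(a)=\bm\xi_\alpha((a,1)^\top)$. This does not depend on the correctness of the closed forms. It also shows the theorem is simply scale equivariance, valid for any family of the form $f(y;a,b)=b\,f(by;a,1)$, not just the gamma.
\end{itemize}

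The paper's route, in turn, rests on closed forms it needs anyway for computing $\bm\Sigma_{\widetilde{\bm\theta}}$ and the ARE numerically. So there the $b$-dependence comes essentially for free once those expressions are in hand.
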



The proof of Theorem~\ref{thm1} is provided in the Appendix. One can verify that the asymptotic covariance matrix of the MDPDE is minimized at $\alpha=0$, which corresponds to MLE. Since the MLE is asymptotically the most efficient member of the MDPDE family under the correctly specified model, it is natural to investigate the asymptotic relative efficiency (ARE) of the proposed estimator. For each parameter, the ARE is defined as the ratio of the asymptotic variance of the MLE to that of the corresponding MDPDE, assuming that the data are generated from the gamma model without contamination. By definition, the ARE equals one at $\alpha=0$. Although the matrices $\bm{J}_\alpha(\bm{\theta})$, $\bm{K}_\alpha(\bm{\theta})$, and consequently $\bm{\Sigma}_{\widetilde{\bm{\theta}}}$ admit closed-form expressions, they involve the gamma, digamma, and trigamma functions, making further analytical simplification difficult. Nevertheless, Theorem \ref{thm1} immediately leads to the following result regarding the dependence of the ARE on $b$.

\begin{theorem}\label{thm2}
The asymptotic relative efficiencies of both $\widehat a_\alpha$ and $\widehat b_\alpha$ are independent of the rate parameter $b$.
\vspace{-1mm}
\begin{proof}
From Theorem \ref{thm1}, the asymptotic variances satisfy
$\bm{\Sigma}^{(1,1)}_{\widetilde{\bm{\theta}}} = L_{1,1}(\bm{\phi})$ and $\bm{\Sigma}^{(2,2)}_{\widetilde{\bm{\theta}}}
= b^2 L_{2,2}(\bm{\phi})$, where $\bm{\phi}=(a,\alpha)'$, and $L_{1,1}(\cdot)$ and $L_{2,2}(\cdot)$ are free of $b$. Therefore, $\mathrm{ARE}(\widehat a_\alpha) = L_{1,1}((a,0)^\top) / L_{1,1}((a,\alpha)^\top)$, which is free of $b$. Likewise, $$\mathrm{ARE}(\widehat b_\alpha) = b^2L_{2,2}((a,0)^\top) / (b^2L_{2,2}((a,\alpha)^\top)) = L_{2,2}((a,0)^\top) / L_{2,2}((a,\alpha)^\top),$$ which is also free of $b$.
\end{proof}
\end{theorem}

\begin{table}[t]
\begin{center}
\caption{\normalsize{Variation of $\mathrm{ARE}(\widehat a_\alpha)$ and $\mathrm{ARE}(\widehat b_\alpha)$ for different choices of $a$ and $\alpha$. Entries marked ``--'' correspond to parameter combinations for which the regularity condition $a>2\alpha/(1+2\alpha)$ is violated.}}
\label{tab_are_comparison}
\begin{tabular}{|c|ccccc|ccccc|}
\hline
& \multicolumn{5}{c|}{$\mathrm{ARE}(\widehat a_\alpha)$}
& \multicolumn{5}{c|}{$\mathrm{ARE}(\widehat b_\alpha)$}\\
\hline
\diagbox{$\alpha$}{$a$}
& 0.5 & 1 & 2 & 5 & 10
& 0.5 & 1 & 2 & 5 & 10\\
\hline
0.10 & 0.92 & 0.98 & 0.98 & 0.98 & 0.98 & 0.91 & 0.96 & 0.97 & 0.97 & 0.98\\
0.20 & 0.70 & 0.94 & 0.95 & 0.93 & 0.93 & 0.66 & 0.88 & 0.91 & 0.92 & 0.92\\
0.30 & 0.38 & 0.90 & 0.90 & 0.87 & 0.86 & 0.30 & 0.78 & 0.84 & 0.85 & 0.85\\
0.40 & 0.09 & 0.86 & 0.85 & 0.82 & 0.80 & 0.04 & 0.70 & 0.78 & 0.79 & 0.79\\
0.50 & -- & 0.82 & 0.81 & 0.76 & 0.75 & -- & 0.62 & 0.72 & 0.73 & 0.73\\
0.60 & -- & 0.78 & 0.77 & 0.71 & 0.69 & -- & 0.56 & 0.67 & 0.68 & 0.68\\
0.70 & -- & 0.75 & 0.73 & 0.67 & 0.65 & -- & 0.51 & 0.62 & 0.63 & 0.63\\
0.80 & -- & 0.72 & 0.70 & 0.63 & 0.61 & -- & 0.48 & 0.59 & 0.60 & 0.60\\
0.90 & -- & 0.70 & 0.67 & 0.60 & 0.58 & -- & 0.44 & 0.56 & 0.57 & 0.57\\
1.00 & -- & 0.68 & 0.65 & 0.57 & 0.56 & -- & 0.42 & 0.54 & 0.54 & 0.54\\
\hline
\end{tabular}
\end{center}
\vspace{-4mm}
\end{table}
The proof of Theorem~\ref{thm2} is provided in the Appendix. Table \ref{tab_are_comparison} reports $\mathrm{ARE}(\widehat a_\alpha)$ and $\mathrm{ARE}(\widehat b_\alpha)$ for a range of values of the shape parameter $a$ and the tuning parameter $\alpha$. As expected, the asymptotic relative efficiencies decrease as $\alpha$ increases, reflecting the classical trade-off between robustness and efficiency. However, except when $a=0.5$, for moderate values of $\alpha$, the reduction in efficiency is relatively small, indicating that the MDPDE retains much of the efficiency of the MLE while providing substantially improved robustness. Consequently, except in cases with evidence that the shape parameter is less than one, when the underlying data may contain atypical observations, the modest loss in asymptotic efficiency is often outweighed by the increased stability of the resulting estimators. The numerical results further illustrate how the efficiencies of $\widehat a_\alpha$ and $\widehat b_\alpha$ vary with the shape parameter $a$, providing useful guidance for selecting an appropriate tuning parameter value in practical applications.

\subsection{Influence function analysis}\label{subsec:influence}
\noindent We now investigate the robustness properties of the proposed estimator through the influence function (IF), originally introduced by \cite{Hampeletc:1986}. Let
$ G_{\pi,y}=(1-\pi)G+\pi\delta_y $
denote the contaminated distribution, where $G$ is the true underlying distribution, $\pi$ is the contamination proportion, and $\delta_y$ is the degenerate distribution placing unit mass at the contamination point $y$. For the MDPDE functional $\bm{T}_\alpha(\cdot)$ defined in (\ref{eq:dpd2}), the quantity
$ \bm{T}_\alpha(G_{\pi,y})-\bm{T}_\alpha(G) $
represents the asymptotic bias induced by contamination. The influence function measures the infinitesimal effect of contamination on the estimator and is defined by
\begin{eqnarray}\label{eq:if_def}
\mathrm{IF}(y;\bm{T}_\alpha,G)
= \lim_{\pi\rightarrow0} \frac{\bm{T}_\alpha(G_{\pi,y}) -
\bm{T}_\alpha(G)}{\pi}.
\end{eqnarray}

Since the gamma distribution is parameterized by the two-dimensional parameter vector $\bm{\theta}=(a,b)^\top$, the influence function is a mapping from $\mathbb{R}^{+}$ to $\mathbb{R}^{2}$. If $\mathrm{IF}(y;\bm{T}_\alpha, G)$ is unbounded, an arbitrarily small amount of contamination can produce an arbitrarily large asymptotic bias, indicating that the corresponding estimator is non-robust. Conversely, if the influence function is bounded over the support of the distribution, the asymptotic bias remains bounded even when the contamination occurs at an extreme observation. Therefore, boundedness of the influence function provides an important theoretical measure of robustness for the proposed estimator.

Under the correctly specified gamma model, namely
$G=F_{\Gamma}(\cdot;\bm{\theta}_{\ast})$, where
$\bm{\theta}_{\ast} = (a_{\ast},b_{\ast})^\top \in\Theta$,
the influence function of the MDPDE follows directly from \cite{basu1998robust} and is given by
\begin{eqnarray}\label{eq:if_mdpde}
 &&
\mathrm{IF}
\left(
y;
\bm{T}_\alpha,
F_{\Gamma}(\cdot;\bm{\theta}_{\ast})
\right)
=
\bm{J}_\alpha(\bm{\theta}_{\ast})^{-1}
\left[
\bm{u}_{\bm{\theta}_{\ast}}(y)
f_{\Gamma}^{\alpha}(y;\bm{\theta}_{\ast})
- \bm{\xi}_{\alpha}(\bm{\theta}_\ast)
\right],
\end{eqnarray}
where $\bm{u}_{\bm{\theta}_{\ast}}(\cdot)$ denotes the score vector given in \eqref{eq:score_vector}, while $\bm{J}_\alpha(\cdot)$ is defined in \eqref{eq:J_K_xi} and its explicit form is given in \eqref{eq:J_K_xi_simple}.

\begin{theorem}\label{thm3}
Suppose the true parameter vector
$\bm{\theta}_{\ast}=(a_{\ast},b_{\ast})^\top$
is finite and satisfies $a_{\ast}> 1$, and thus $a_{\ast}> \alpha / (1 + \alpha)$ and all expressions in (\ref{eq:J_K_xi_simple}) are well defined. Then, each component of the influence function
$\mathrm{IF}\left(y;\bm{T}_\alpha, F_{\Gamma}(\cdot;\bm{\theta}_{\ast})\right)$ is bounded if and only if $\alpha>0$.
\end{theorem}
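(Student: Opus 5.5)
The influence function in \eqref{eq:if_mdpde} is a fixed matrix applied to a vector depending on $y$, so I will reduce boundedness of the IF to boundedness of the two scalar functions $h_1(y) = u_{a;b}(y) f_\Gamma^\alpha(y;\bm\theta_\ast)$ and $h_2(y) = u_{b;a}(y) f_\Gamma^\alpha(y;\bm\theta_\ast)$ on $(0,\infty)$. Here $u_{a;b}$ and $u_{b;a}$ are the score components in \eqref{eq:score_vector}.

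The matrix $\bm J_\alpha(\bm\theta_\ast)$ is finite and nonsingular. Finiteness follows from \eqref{eq:J_K_xi_simple}, which is well defined because $r_{a_\ast,\alpha}>0$ whenever $a_\ast>1$. Nonsingularity holds because $\bm J_\alpha$ is a Gram matrix $\int \bm u\bm u^\top f^{1+\alpha}$. The functions $\log y$ and $y$ are linearly independent together with the constants, so no nontrivial linear combination of the two score components vanishes a.e., and $\bm J_\alpha$ is positive definite. The vector $\bm\xi_\alpha(\bm\theta_\ast)$ is a finite constant.

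Hence $\mathrm{IF}=\bm J_\alpha^{-1}(\bm h(y)-\bm\xi_\alpha)$, where $\bm h=(h_1,h_2)^\top$. Boundedness of $\bm h$ therefore implies boundedness of both IF components. For the converse I will use $\bm h = \bm J_\alpha \mathrm{IF}+\bm\xi_\alpha$: if both IF components were bounded, $\bm h$ would be bounded. So it suffices to show that $\bm h$ is bounded iff $\alpha>0$. Because I will show that each of $h_1$ and $h_2$ is unbounded when $\alpha=0$, each IF component being bounded is equivalent to $\alpha>0$. I will establish this by showing that some component of $\bm J_0^{-1}\bm h$ blows up along a sequence, checking each component separately.

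\textbf{Case $\alpha>0$.} Write $f^\alpha = c\,y^{\alpha(a_\ast-1)}e^{-\alpha b_\ast y}$ with $\alpha(a_\ast-1)>0$, using $a_\ast>1$.
\begin{itemize}
\item As $y\to\infty$, the exponential decay dominates the polynomial growth of $|\log y|$ and $y$, so $h_1,h_2\to0$.
\item As $y\to0^+$, $y^{\alpha(a_\ast-1)}\log y\to0$ and $y^{\alpha(a_\ast-1)}\to 0$, so again $h_1,h_2\to0$.
\item By continuity on compact subintervals, $h_1$ and $h_2$ are bounded.
\end{itemize}
This is exactly where the hypothesis $a_\ast>1$ is used. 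For $a_\ast<1$ the weight blows up at $0$, and for $a_\ast=1$ the factor $\log y$ would not be damped.

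\textbf{Case $\alpha=0$.} Here $\bm J_0$ is the Fisher information, and the IF equals $\bm J_0^{-1}\bm u_{\bm\theta_\ast}(y)$ because $\bm\xi_0=\bm 0$. Write $\bm J_0^{-1}=\begin{pmatrix}p&q\\ q&s\end{pmatrix}$, so that
\[
\mathrm{IF}_1 = p(\log y + c_1) + q(a_\ast/b_\ast - y), \qquad \mathrm{IF}_2 = q(\log y+c_1)+s(a_\ast/b_\ast-y),
\]
where $c_1=\log b_\ast-\psi(a_\ast)$.
\begin{itemize}
\item If the coefficient of $y$ is nonzero, that component diverges linearly as $y\to\infty$.
\item If it is zero, then $p\neq0$ for $\mathrm{IF}_1$ (respectively $q\ne0$ for $\mathrm{IF}_2$), since $\bm J_0^{-1}$ is nonsingular. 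The $\log y$ term then diverges as $y\to0^+$.
\end{itemize}
In fact $q\neq0$, because the Fisher information has a nonzero off-diagonal entry $-1/b_\ast$, and $p,s>0$. So both components are unbounded, as $y\to\infty$ and as $y\to 0^+$ respectively.

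The main point requiring care is the nonsingularity of $\bm J_\alpha$ and the sign and nonvanishing bookkeeping of the entries of $\bm J_0^{-1}$ for the ``only if'' direction. I would handle this directly with the explicit $\bm J_0=\begin{pmatrix}\psi'(a)&-1/b\\-1/b&a/b^2\end{pmatrix}$, whose determinant $(a\psi'(a)-1)/b^2$ is positive.
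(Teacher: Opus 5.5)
Your proposal is correct and follows essentially the same route as the paper. Since $\bm{J}_\alpha(\bm{\theta}_\ast)^{-1}$ and $\bm{\xi}_\alpha(\bm{\theta}_\ast)$ do not depend on $y$, boundedness reduces to that of $\bm{u}_{\bm{\theta}_\ast}(y) f_\Gamma^\alpha(y;\bm{\theta}_\ast)$: for $\alpha>0$ and $a_\ast>1$ this is continuous and tends to $\bm{0}$ at both ends of $(0,\infty)$, while for $\alpha=0$ the score itself is unbounded. Your extra steps make rigorous two points the paper leaves implicit: positive definiteness of $\bm{J}_\alpha$, via linear independence of $1,\log y,y$, and the explicit inverse Fisher information showing that \emph{each} IF component, not just the vector, is unbounded at $\alpha=0$.
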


\begin{figure}[]
	\centering
	\includegraphics[width = \linewidth]{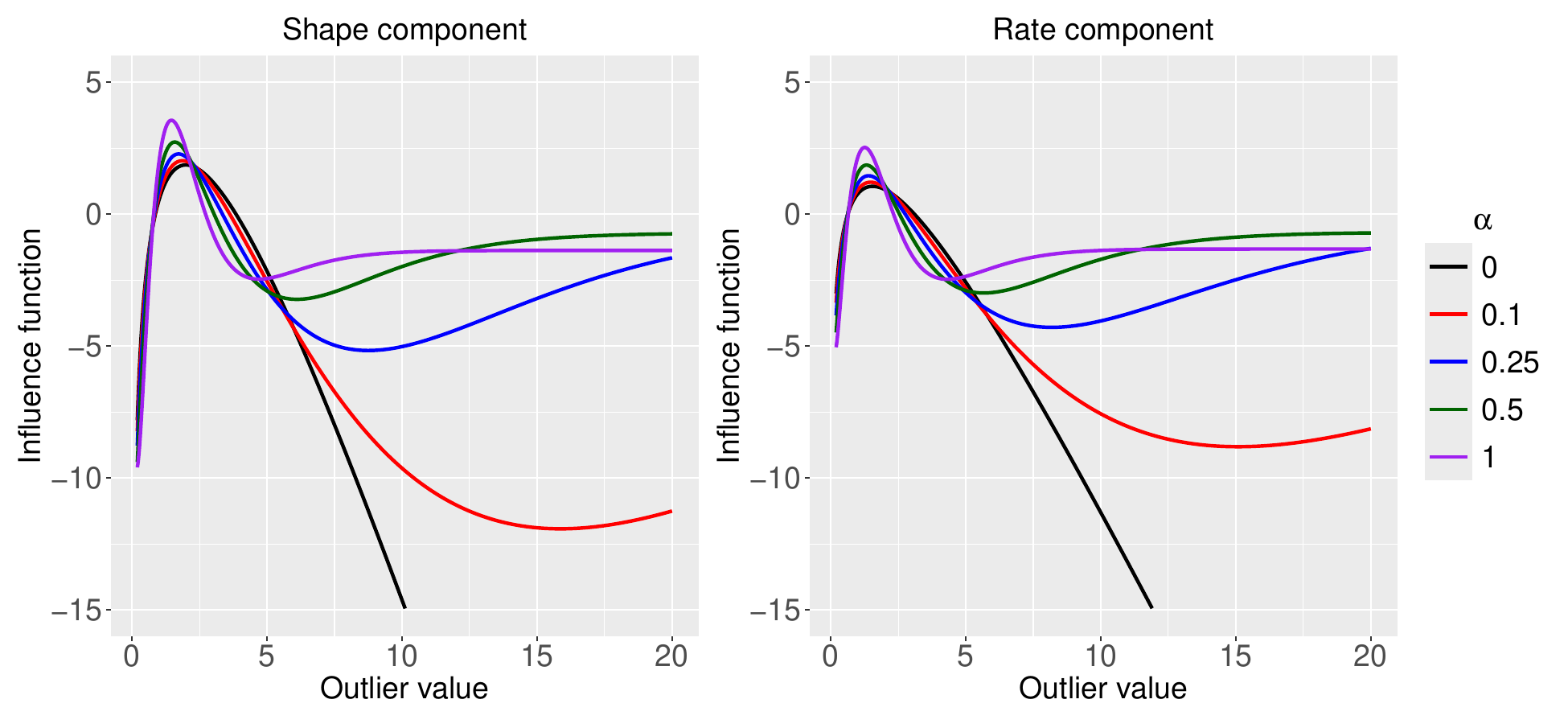}
\caption{Components of the influence function $\mathrm{IF}
\left(y; \bm{T}_\alpha, F_{\Gamma}(\cdot;\bm{\theta}_{\ast})
\right)$ for $a_\ast$ (left panel) and $b_\ast$ (right panel) at $\textrm{Gamma}(2, 1)$ distribution.}
	\label{fig_if}
    \vspace{-2mm}
\end{figure}

The proof of Theorem~\ref{thm3} is provided in the Appendix. In case the condition $a_\ast > 1$ for Theorem~\ref{thm3} does not hold, i.e., $a_\ast \leq 1$, but $a_{\ast}> \alpha / (1 + \alpha)$ holds, the determining term $\lim_{y\rightarrow0^+}  y^{\alpha(a_\ast-1)} \log y = -\infty$. Hence, the boundedness is not guaranteed. Besides, when $a_\ast < 1$, the ARE drops drastically as $\alpha$ increases, as seen in Table~\ref{tab_are_comparison}. We hence do not explore the case of $a_\ast \leq 1$ further. To illustrate the theoretical findings in Theorem~\ref{thm3}, Figure~\ref{fig_if} displays the two components of
$\mathrm{IF}\left(y;\bm{T}_\alpha, F_{\Gamma}(\cdot;\bm{\theta}_{\ast})\right)
$
for several choices of the tuning parameter $\alpha$ under a gamma distribution with true parameter values $a_{\ast}=2$ and $b_{\ast}=1$. As predicted by the theorem, both components are unbounded when $\alpha=0$, corresponding to the maximum likelihood estimator, whereas they remain bounded for every $\alpha>0$. Moreover, the influence functions decay more rapidly toward zero as $\alpha$ increases, indicating that larger values of the tuning parameter provide stronger robustness against extreme observations. These results confirm the robustness of the MDPDE while illustrating the non-robustness of MLE.

\subsection{Optimal tuning parameter selection}
\label{subsec:optimal_alpha}

\noindent The tuning parameter $\alpha$ controls the balance between robustness and statistical efficiency in the MDPDE. Smaller values of $\alpha$ yield estimators that are highly efficient under the assumed gamma model but less robust to outliers, whereas larger values yield greater robustness at the expense of efficiency. Consequently, selecting an appropriate value of $\alpha$ is an important practical issue. Since the degree of contamination in a dataset is generally unknown, a fully data-driven procedure is desirable for determining the optimal tuning parameter.

Following the approach of \cite{fujisawa2006robust}, subsequently adopted by \cite{seo2017extreme}, \cite{hazra2024robust}, and \cite{hazra2025minimum}, we select the tuning parameter by minimizing an empirical Cram\'er--von Mises (CVM) criterion. Specifically, the optimal tuning parameter is defined as
\begin{equation}\label{EQ:CVM}
\alpha_{\mathrm{opt}}
=
\arg\min_{\alpha}
\frac{1}{n}
\sum_{i=1}^{n}
\left\{
\frac{i}{n+1}
-
F_{\Gamma}
\left(
Y_{(i)};
\widehat{\bm{\theta}}_{\alpha}^{(-i)}
\right)
\right\}^{2},
\end{equation}
where $Y_{(1)},\ldots,Y_{(n)}$ denote the order statistics of the sample and $\widehat{\bm{\theta}}_{\alpha}^{(-i)}$ is the MDPDE computed after deleting the $i$th observation from the dataset. This leave-one-out criterion favors values of $\alpha$ that produce fitted gamma distributions with good predictive agreement while remaining resistant to the influence of individual observations. Alternative approaches for selecting the tuning parameter have also been proposed in the literature \citep{hong2001automatic, warwick2005choosing, basak2021optimal}, several of which are applicable to general $M$-estimators rather than being specific to the MDPDE framework.

\section{Simulation studies}\label{sec:simulation}

\noindent To assess the performance of the proposed MDPDE for different choices of the tuning parameter $\alpha$, including the data-driven choice $\alpha_{\mathrm{opt}}$ defined in (\ref{EQ:CVM}), and to compare it with some existing estimation procedures, we conduct a Monte Carlo simulation study under four different contamination levels. We first discuss some potential competing estimation procedures in Section~\ref{subsec:gamma_parest}.

\subsection{Alternative parameter estimation procedures}
\label{subsec:gamma_parest}

\noindent Several methods have been proposed in the literature for estimating the parameters of univariate probability distributions, including the two-parameter gamma distribution \citep{johnson1994continuous, hosking1990moments, stuart2010kendall}. Let $\mathcal{Y}=\{Y_1,\ldots, Y_n\}$ denote an independent and identically distributed random sample from a $\mathrm{Gamma}(a,b)$ distribution, where $a$ and $b$ denote the shape and rate parameters, respectively.

The most widely used estimation procedure is the maximum likelihood (ML) estimation. The maximum likelihood estimators (MLEs) are obtained by maximizing the log-likelihood function
$
\ell(a,b;\mathcal{Y}) = \sum_{i=1}^{n} \log f_{\Gamma}(Y_i;a,b)
$
over the parameter space. Although closed-form expressions for both estimators are not available, the likelihood equations can be simplified considerably. Given the MLE $\widehat a_{\mathrm{ML}}$ of the shape parameter, the MLE of the rate parameter has the explicit form
$\widehat b_{\mathrm{ML}} = \widehat a_{\mathrm{ML}} / \overline{Y}$, where $\overline{Y}$ denotes the sample mean. Consequently, the estimation problem reduces to solving a single nonlinear equation in $a$, after which $\widehat b_{\mathrm{ML}}$ is obtained directly.

The method of moments (MM) estimates are obtained by equating the sample mean and sample variance to their corresponding population moments. Since $\mathrm{E}(Y)=a / b$ and $\mathrm{Var}(Y)=a/b^2$, the resulting estimators are
$\widehat a_{\mathrm{MM}} = \overline{Y}^{\,2} / S^2$, and $\widehat b_{\mathrm{MM}} = \overline{Y} / S^2$, where $S^2$ denotes the sample variance.

The percentile (PT) estimation method minimizes the discrepancy between the ordered observations and the theoretical quantiles of the fitted gamma distribution. Let $Q_{\Gamma}(p;a,b) = F_{\Gamma}^{-1}(p;a,b),~ 0<p<1,$ denote the gamma quantile function, and let $Y_{(1)}<\cdots<Y_{(n)}$ be the ordered sample. The percentile estimators are obtained by minimizing
$ \sum_{i=1}^{n} \left[Y_{(i)} - Q_{\Gamma}\left(i/(n+1); a, b \right) \right]^2.$

The least squares (LS) estimation method instead minimizes the discrepancy between the empirical and fitted distribution functions. Specifically, the estimators are obtained by minimizing $\sum_{i=1}^{n} \left[ i/(n+1) - F_{\Gamma}
(Y_{(i)}; a, b) \right]^2.$ A weighted version of this approach, known as the weighted least squares (WLS) estimator, minimizes
$\sum_{i=1}^{n} w_i \left[i/(n+1) - F_{\Gamma}(Y_{(i)};a,b)
\right]^2$, where $w_i =\{(n+1)^2(n+2) \} / \{i(n-i+1) \}$. The weighting scheme assigns greater emphasis to observations in the tails of the distribution.

In $L$-moment (LM) estimation, instead of obtaining parameter estimates by equating the raw or central population moments with their sample counterparts as in the method of moments, we equate the population and sample $L$-moments, which are linear combinations of order statistics. Since $L$-moments are based on order statistics rather than ordinary moments, they are generally regarded as more robust to atypical observations than the traditional method of moments. For the gamma distribution, let $\lambda_1(a,b)$ and $\lambda_2(a,b)$ denote the first and second population $L$-moments, respectively. The LM estimators are obtained by solving the estimating equations
\begin{eqnarray}
\nonumber \lambda_1(a,b) &=& n^{-1} \sum_{i=1}^n Y_{(i)} = \Bar{Y}, \\
\nonumber \lambda_2(a,b) &=& \frac{2}{n(n-1)} \sum_{i=1}^n (i-1)Y_{(i)} - \Bar{Y},
\end{eqnarray}
where $Y_{(1)} < \cdots < Y_{(n)}$ denote the order statistics. The population $L$-moments $\lambda_1(a,b)$ and $\lambda_2(a,b)$ are evaluated numerically from the fitted gamma distribution. Solving the above system yields the LM estimators $\widehat a_{\mathrm{LM}}$ and $\widehat b_{\mathrm{LM}}$ of the shape and rate parameters, respectively. The $L$-moment estimation for the gamma distribution is implemented in the \texttt{pelgam} function within the \texttt{R} package \texttt{lmom} \citep{hosking2026}.

We further compare the performance of the above approaches with the proposed MDPDE through simulation studies.

\subsection{Simulation designs and results}
\label{subsec:sim_designs}

\noindent We conduct a Monte Carlo simulation study under four different contamination levels: 0\%, 1\%, 5\%, and 10\%. Specifically, we generate 1000 random samples of size $n=100$ from the $\textrm{Gamma}(2,1)$ distribution, which serves as the true data-generating model, and replace a specified proportion of observations with two types of outliers.

In the first contamination setting (C1), the outliers are generated from a degenerate distribution concentrated at the $(1 - 10^{-3})$-th quantile of the $\textrm{Gamma}(2,1)$ distribution, i.e., at a value of 9.233. Such observations may occasionally arise in rainfall records due to exceptionally rare meteorological events. Although these values are important in extreme value analysis, they can substantially distort inference when the primary objective is to model the central part of the rainfall distribution, such as estimating the 30\%-probability rainfall amount, i.e., the $0.7^{th}$ quantile of the true rainfall distribution. In the second contamination setting (C2), the outliers are generated from a degenerate distribution at the $(1-10^{-5})$-th quantile of the $\textrm{Gamma}(2,1)$ distribution, i.e., at a value of 14.237. Observations of this magnitude are extremely unlikely to occur in practice within a sample of size 100 and are more representative of gross recording errors or instrument malfunction. These two contamination schemes therefore represent moderate and severe departures from the assumed gamma model. For each simulated dataset, the model parameters are estimated using the MDPDE for several values of $\alpha$, along with the competing estimation methods described in Section \ref{subsec:gamma_parest}. The empirical bias and mean squared error (MSE) of the resulting estimators are then computed over the 1000 replications. The simulation results for contamination settings C1 and C2 are reported in Tables \ref{table_sim1} and \ref{table_sim2}, respectively.

\begin{table}[!tb]
\caption{\normalsize Average bias and mean square error (MSE) in the estimation of the gamma parameters $a$ and $b$ using different estimation methods under different outlier percentages. Here, the outlier value is the $(1-10^{-3})$-th quantile of the true distribution $\mathrm{Gamma}(a,b)$. Highlighted entries in each column correspond to the cases with bias and MSE closest to zero.}
\begin{tabular}{cccccccccc}
\hline
 &  & \multicolumn{8}{c}{Outlier percentage} \\
 &  & \multicolumn{2}{c}{0\%} & \multicolumn{2}{c}{1\%} & \multicolumn{2}{c}{5\%} & \multicolumn{2}{c}{10\%} \\
Method & $\bm{\theta}$ & Bias & MSE & Bias & MSE & Bias & MSE & Bias & MSE \\
\hline
\multirow{2}{*}{ML}
& $a$ & 0.073 & \textbf{0.087} & -0.078 & \textbf{0.061} & -0.381 & 0.173 & -0.587 & 0.363 \\
& $b$ & 0.041 & 0.028 & -0.069 & \textbf{0.021} & -0.312 & 0.102 & -0.482 & 0.234 \\\cline{2-10}

\multirow{2}{*}{MM}
& $a$ & 0.090 & 0.131 & -0.264 & 0.123 & -0.731 & 0.552 & -0.862 & 0.757 \\
& $b$ & 0.049 & 0.039 & -0.161 & 0.038 & -0.462 & 0.215 & -0.584 & 0.342 \\\cline{2-10}

\multirow{2}{*}{PT}
& $a$ & -0.070 & 0.142 & -0.461 & 0.272 & -0.907 & 0.840 & -0.919 & 0.859 \\
& $b$ & -0.041 & 0.039 & -0.260 & 0.080 & -0.536 & 0.289 & -0.610 & 0.373 \\\cline{2-10}

\multirow{2}{*}{LS}
& $a$ & \textbf{0.026} & 0.110 & -0.045 & 0.104 & -0.183 & 0.123 & -0.391 & 0.229 \\
& $b$ & 0.012 & 0.036 & -0.036 & 0.035 & -0.158 & 0.052 & -0.325 & 0.127 \\\cline{2-10}

\multirow{2}{*}{WLS}
& $a$ & 0.040 & 0.094 & -0.042 & 0.083 & -0.190 & 0.108 & -0.387 & 0.216 \\
& $b$ & 0.021 & 0.031 & -0.036 & 0.029 & -0.169 & 0.051 & -0.334 & 0.130 \\\cline{2-10}

\multirow{2}{*}{LM}
& $a$ & 0.051 & 0.094 & -0.162 & 0.087 & -0.566 & 0.347 & -0.790 & 0.641 \\
& $b$ & \textbf{-0.003} & \textbf{0.027} & 0.147 & 0.049 & 0.660 & 0.466 & 1.273 & 1.658 \\\cline{2-10}

\multirow{2}{*}{\begin{tabular}[c]{@{}c@{}}MDPDE\\ ($\alpha=0.1$)\end{tabular}}
& $a$ & 0.070 & 0.088 & -0.047 & 0.065 & -0.319 & 0.137 & -0.549 & 0.323 \\
& $b$ & 0.039 & 0.028 & -0.045 & 0.022 & -0.265 & 0.078 & -0.453 & 0.209 \\\cline{2-10}

\multirow{2}{*}{\begin{tabular}[c]{@{}c@{}}MDPDE\\ ($\alpha=0.2$)\end{tabular}}
& $a$ & 0.070 & 0.091 & -0.020 & 0.071 & -0.243 & 0.107 & -0.491 & 0.268 \\
& $b$ & 0.039 & 0.030 & -0.023 & 0.025 & -0.207 & 0.057 & -0.409 & 0.174 \\\cline{2-10}

\multirow{2}{*}{\begin{tabular}[c]{@{}c@{}}MDPDE\\ ($\alpha=0.5$)\end{tabular}}
& $a$ & 0.075 & 0.107 & 0.032 & 0.094 & -0.049 & 0.096 & -0.225 & 0.142 \\
& $b$ & 0.043 & 0.038 & 0.019 & 0.036 & -0.056 & \textbf{0.039} & -0.204 & 0.078 \\\cline{2-10}

\multirow{2}{*}{\begin{tabular}[c]{@{}c@{}}MDPDE\\ ($\alpha=1$)\end{tabular}}
& $a$ & 0.091 & 0.140 & 0.065 & 0.133 & \textbf{0.021} & 0.125 & \textbf{-0.090} & \textbf{0.125} \\
& $b$ & 0.054 & 0.054 & 0.044 & 0.055 & \textbf{-0.005} & 0.048 & \textbf{-0.104} & \textbf{0.055} \\\cline{2-10}

\multirow{2}{*}{\begin{tabular}[c]{@{}c@{}}MDPDE\\ ($\alpha=\alpha_{\mathrm{opt}}$)\end{tabular}}
& $a$ & 0.056 & 0.095 & \textbf{-0.010} & 0.079 & -0.174 & \textbf{0.092} & -0.370 & 0.181 \\
& $b$ & 0.028 & 0.033 & \textbf{-0.017} & 0.028 & -0.156 & 0.044 & -0.318 & 0.114 \\\cline{2-10}
\hline
\end{tabular}
\label{table_sim1}
\end{table}

We first summarize the findings based on Table \ref{table_sim1}. Under the uncontaminated setting (0\% outliers), the LS estimator produces the smallest bias for the shape parameter $a$, whereas the LM estimator yields the smallest bias and MSE for the rate parameter $b$. The ML estimator achieves the smallest MSE for $a$, while MDPDE with $\alpha=0.1$ performs comparably well. As the contamination level increases to 1\%, MDPDE with $\alpha_{\mathrm{opt}}$ yields the smallest biases for both parameters, whereas ML continues to provide the smallest MSE. Under moderate contamination (5\%), the superiority of robust estimation becomes evident: MDPDE with $\alpha=1$ gives the smallest biases for both parameters, MDPDE with $\alpha=\alpha_{\mathrm{opt}}$ achieves the smallest MSE for $a$, and MDPDE with $\alpha=0.5$ achieves the smallest MSE for $b$. Although the LM estimator performs well in the absence of contamination, its performance deteriorates rapidly once outliers are introduced, particularly for estimating $b$. Under severe contamination (10\%), MDPDE with $\alpha=1$ attains both the smallest biases and the smallest MSEs for $a$ and $b$, while MDPDE with $\alpha=0.5$ remains a competitive alternative. Overall, the results demonstrate that MDPDE offers a favorable balance between robustness and efficiency, and the data-adaptive choice $\alpha=\alpha_{\mathrm{opt}}$ provides a practical compromise when the contamination level is unknown.

\begin{table}[!tb]
\caption{\normalsize Average bias and mean square error (MSE) in the estimation of the gamma parameters $a$ and $b$ using different estimation methods under different outlier percentages. Here, the outlier value is the $(1-10^{-5})$-th quantile of the true distribution $\mathrm{Gamma}(a,b)$. Highlighted entries in each column correspond to the cases with bias and MSE closest to zero.}
\begin{tabular}{cccccccccc}
\hline
 &  & \multicolumn{8}{c}{Outlier percentage} \\
 &  & \multicolumn{2}{c}{0\%} & \multicolumn{2}{c}{1\%} & \multicolumn{2}{c}{5\%} & \multicolumn{2}{c}{10\%} \\
Method & $\bm{\theta}$ & Bias & MSE & Bias & MSE & Bias & MSE & Bias & MSE \\
\hline

\multirow{2}{*}{ML}
& $a$ & 0.046 & 0.077 & -0.193 & 0.081 & -0.670 & 0.463 & -0.899 & 0.815 \\
& $b$ & 0.030 & \textbf{0.025} & -0.148 & 0.034 & -0.490 & 0.242 & -0.659 & 0.434 \\\cline{2-10}

\multirow{2}{*}{MM}
& $a$ & 0.057 & 0.123 & -0.688 & 0.495 & -1.248 & 1.564 & -1.322 & 1.751 \\
& $b$ & 0.036 & 0.036 & -0.384 & 0.150 & -0.712 & 0.508 & -0.790 & 0.625 \\\cline{2-10}

\multirow{2}{*}{PT}
& $a$ & -0.102 & 0.149 & -0.832 & 0.728 & -1.436 & 2.067 & -1.386 & 1.925 \\
& $b$ & -0.055 & 0.041 & -0.445 & 0.204 & -0.774 & 0.599 & -0.810 & 0.656 \\\cline{2-10}

\multirow{2}{*}{LS}
& $a$ & -0.019 & 0.096 & -0.040 & 0.110 & -0.192 & 0.125 & -0.352 & 0.190 \\
& $b$ & -0.007 & 0.032 & -0.038 & 0.038 & -0.164 & 0.054 & -0.305 & 0.111 \\\cline{2-10}

\multirow{2}{*}{WLS}
& $a$ & \textbf{0.004} & 0.082 & -0.035 & 0.090 & -0.188 & 0.107 & -0.329 & 0.167 \\
& $b$ & \textbf{0.006} & 0.027 & -0.036 & 0.031 & -0.168 & 0.051 & -0.301 & 0.108 \\\cline{2-10}

\multirow{2}{*}{LM}
& $a$ & 0.010 & 0.087 & -0.299 & 0.139 & -0.910 & 0.841 & -1.152 & 1.332 \\
& $b$ & 0.014 & 0.027 & 0.272 & 0.106 & 1.414 & 2.047 & 2.824 & 8.034 \\\cline{2-10}

\multirow{2}{*}{\begin{tabular}[c]{@{}c@{}}MDPDE\\ ($\alpha=0.1$)\end{tabular}}
& $a$ & 0.041 & \textbf{0.077} & -0.074 & \textbf{0.072} & -0.489 & 0.266 & -0.796 & 0.643 \\
& $b$ & 0.028 & 0.025 & -0.064 & \textbf{0.026} & -0.367 & 0.142 & -0.596 & 0.357 \\\cline{2-10}

\multirow{2}{*}{\begin{tabular}[c]{@{}c@{}}MDPDE\\ ($\alpha=0.2$)\end{tabular}}
& $a$ & 0.040 & 0.079 & \textbf{-0.007} & 0.081 & -0.251 & 0.130 & -0.607 & 0.396 \\
& $b$ & 0.028 & 0.027 & \textbf{-0.015} & 0.029 & -0.197 & 0.063 & -0.468 & 0.228 \\\cline{2-10}

\multirow{2}{*}{\begin{tabular}[c]{@{}c@{}}MDPDE\\ ($\alpha=0.5$)\end{tabular}}
& $a$ & 0.047 & 0.092 & 0.044 & 0.104 & \textbf{0.003} & \textbf{0.101} & -0.060 & \textbf{0.104} \\
& $b$ & 0.036 & 0.035 & 0.022 & 0.039 & \textbf{-0.011} & \textbf{0.038} & \textbf{-0.074} & \textbf{0.044} \\\cline{2-10}

\multirow{2}{*}{\begin{tabular}[c]{@{}c@{}}MDPDE\\ ($\alpha=1$)\end{tabular}}
& $a$ & 0.067 & 0.125 & 0.064 & 0.143 & 0.005 & 0.123 & \textbf{-0.059} & 0.107 \\
& $b$ & 0.054 & 0.052 & 0.036 & 0.055 & -0.014 & 0.047 & -0.086 & 0.046 \\\cline{2-10}

\multirow{2}{*}{\begin{tabular}[c]{@{}c@{}}MDPDE\\ ($\alpha=\alpha_{\mathrm{opt}}$)\end{tabular}}
& $a$ & 0.028 & 0.082 & -0.022 & 0.088 & -0.237 & 0.115 & -0.432 & 0.222 \\
& $b$ & 0.019 & 0.029 & -0.031 & 0.032 & -0.188 & 0.055 & -0.344 & 0.129 \\\cline{2-10}

\hline
\end{tabular}
\label{table_sim2}
\end{table}

We next discuss the findings based on Table \ref{table_sim2}, when the contaminating observations are substantially more extreme. Under the uncontaminated setting (0\% outliers), the WLS estimator yields the smallest bias for both $a$ and $b$, while the ML estimator attains the smallest MSE for $b$. MDPDE with $\alpha=0.1$ achieves the same minimum MSE for $a$ as ML, indicating only a negligible loss of efficiency in the absence of contamination. With 1\% contamination, MDPDE with $\alpha=0.2$ provides the smallest biases for both parameters, whereas MDPDE with $\alpha=0.1$ achieves the smallest MSEs. As the contamination level increases to 5\%, the advantages of robust estimation become much more pronounced. MDPDE with $\alpha=0.5$ produces the smallest bias and MSE for both parameters, while the performances of the classical methods, particularly MM, PT, and LM, deteriorate substantially. For 10\% contamination, MDPDE continues to dominate the competing methods. MDPDE with $\alpha=1$ yields the smallest bias for the shape parameter $a$, whereas MDPDE with $\alpha=0.5$ gives the smallest bias for the rate parameter $b$ and also attains the smallest MSEs for both parameters. Although the data-adaptive choice $\alpha=\alpha_{\mathrm{opt}}$ is not uniformly optimal under such severe contamination, it consistently outperforms classical estimation methods in both bias and MSE. Overall, the results demonstrate that the benefits of MDPDE become increasingly evident as the severity of contamination increases, with moderate values of the tuning parameter ($\alpha\approx0.5$) providing an excellent compromise between robustness and efficiency.

Overall, the results in Tables \ref{table_sim1} and \ref{table_sim2} demonstrate the advantages of MDPDE over the classical estimation methods discussed in Section \ref{subsec:gamma_parest}, particularly in the presence of contamination. When the contamination level is low (1\%) and the outliers are relatively mild (Table \ref{table_sim1}), the ML estimator still achieves the smallest MSE, although MDPDE with positive values of $\alpha$ often yields substantially smaller bias at only a modest efficiency loss. For more extreme outliers (Table \ref{table_sim2}), even at the same contamination level, MDPDE with $\alpha=0.1$, $0.2$, or $0.5$ consistently outperforms the competing methods in terms of bias and MSE. While the influence function analysis in Section \ref{subsec:influence} establishes the bounded influence of infinitesimal contamination for $\alpha>0$, the simulation studies illustrate the finite-sample benefits of MDPDE under moderate and severe contamination. As the contamination percentage increases, the performance of the ML, MM, PT, LS, WLS, and particularly the LM estimators deteriorates rapidly, whereas MDPDE remains comparatively stable. For moderate and high contamination levels (5\% and 10\%), MDPDE with $\alpha=0.5$ or $1$ generally provides the best overall performance, while the data-adaptive choice $\alpha=\alpha_{\mathrm{opt}}$ consistently performs well and substantially outperforms the classical estimation methods. Overall, these results suggest that MDPDE offers an effective balance between robustness and efficiency, with $\alpha_{\mathrm{opt}}$ serving as a practical choice when the extent of contamination is unknown.

\section{Indian rainfall data analysis} \label{sec:data_application}

\noindent Unlike administrative state boundaries, India is partitioned into 36 meteorological subdivisions that are relatively homogeneous from a climatological perspective \citep{guhathakurta2008trends}. The geographical boundaries of these subdivisions are shown in Figure \ref{fig_data_illstration}. We obtained areally weighted rainfall data for these subdivisions from the Open Government Data (OGD) Platform, India (\url{https://data.gov.in}), along with the corresponding subdivision boundary shapefile. The dataset was prepared by the India Meteorological Department (IMD), which first estimated monthly rainfall for each of the 641 districts by averaging observations from all available rain-gauge stations within each district. Subsequently, subdivision-level rainfall was computed by applying an area-weighted aggregation of the district-level rainfall values \cite{guhathakurta2008trends, guhathakurta2011new}. In the present study, we consider total monsoon rainfall observed over the four monsoon months (June--September) for each of the 36 meteorological subdivisions during the period 1951--2014. This results in a dataset consisting of 64 annual monsoon rainfall totals for each meteorological subdivision. We illustrate the data for the years 1951 and 2014 in Figure \ref{fig_data_illstration}. A similar month-level dataset has been analyzed by \cite{hazra2024robust} for robust rainfall modeling and further analyzed by \cite{bhowmik2026bayesian} for Bayesian spatiotemporal modeling. 

\begin{figure}[t]
    \centering
\includegraphics[width = 0.9\linewidth]{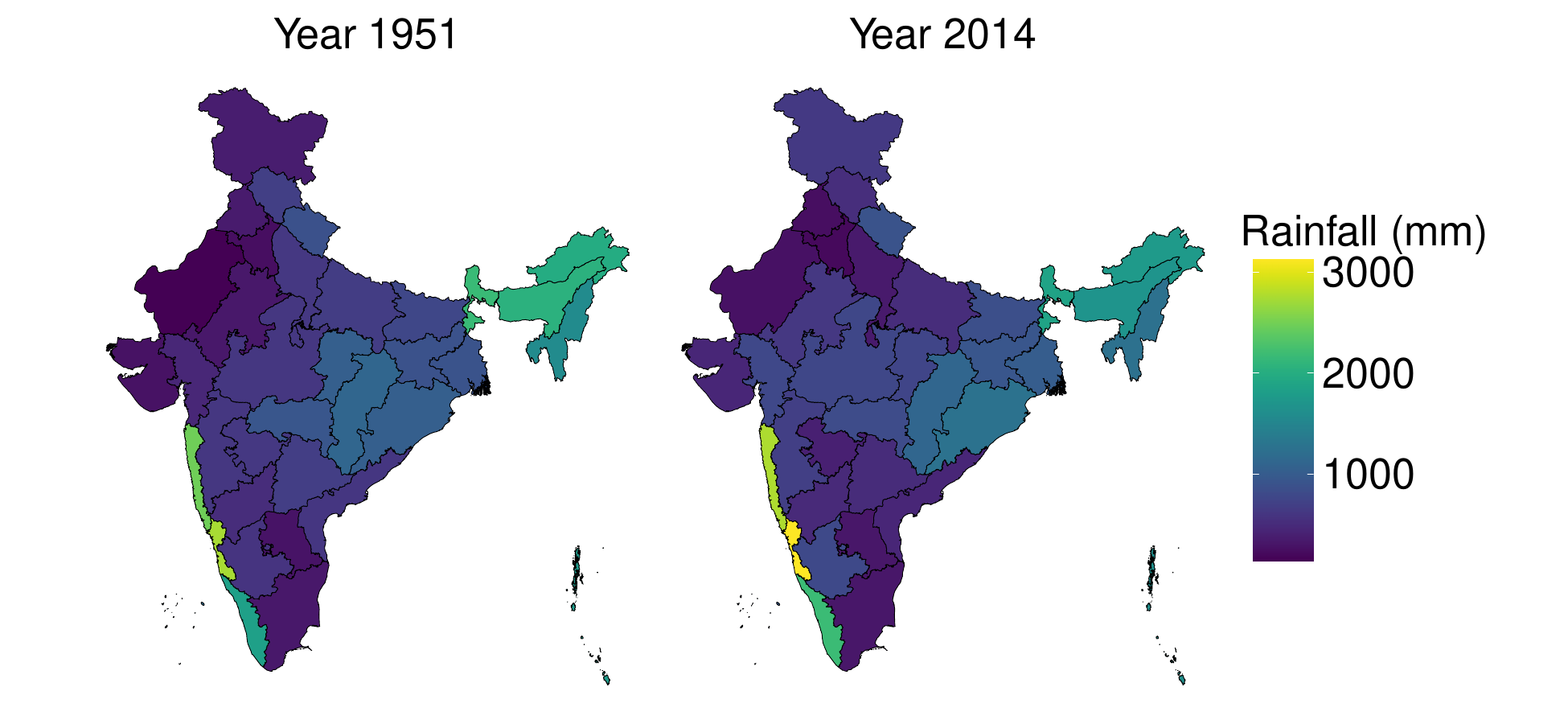}
    \caption{Meteorological subdivision-wise rainfall data for 36 subdivisions of India during the monsoon season (June--September) for the years 1951 and 2014.}
    \label{fig_data_illstration}
\end{figure}

Since the dataset spans 64 years (1951--2014), a period during which substantial climatic changes have occurred, the presence of temporal trends in the annual monsoon rainfall totals is plausible. In an earlier spatiotemporal Bayesian model-based analysis, \cite{bhowmik2026bayesian} identify a significant trend in the marginal distribution for three out of 36 subdivisions. Prior to fitting the gamma distribution, we therefore investigate the existence of monotonic trends using the robust nonparametric Cox--Stuart trend test \citep{cox1955some}, implemented through the function \texttt{cs.test} in the \texttt{R} package \texttt{trend} \citep{pohlert2023trend}. The test is applied separately to the annual total monsoon rainfall series for each of the 36 meteorological subdivisions. Significant trends are detected for two subdivisions: (i) Nagaland, Manipur, Mizoram, and Tripura, and (ii) East Uttar Pradesh, at the 1\% significance levels. For each subdivision, we estimate the temporal trend by fitting an $L_1$ regression model to the log-transformed monsoon rainfall totals; i.e., we fit $\log \widetilde{Y}_i = \beta_0 + \beta_1 \text{year}_i + \varepsilon_i, ~i=1,\ldots,n$, estimate $\beta_1$ by $\widehat{\beta}_1$ using $L_1$ regression, and obtain $Y_i = \exp\{ \log \widetilde{Y}_i - \widehat{\beta}_1 \text{year}_i \}, ~i=1,\ldots,n$. Here, $\widetilde{Y}_i$ denotes the raw data for the $i^{th}$ year, while $Y_i$ denotes the processed dataset to be analyzed using the methodology in this paper. The estimated $\widehat{\beta}_1$s range between -0.005 and 0.007, across subdivisions. Throughout the remainder of the paper, the term \textit{detrended data} refers to $Y_i,i=1,\ldots, n$. This data processing step removes large-scale temporal variation while preserving positive support and a similar scale across observations; Pearson's correlation between the raw and detrended data is 0.997. Applying the Cox--Stuart test to the detrended data confirms the absence of significant trends, allowing us to reasonably assume that the observations within each meteorological subdivision are independent and identically distributed and hence suitable for modeling using the gamma distribution and the proposed MDPDE methodology. Here, the independence assumption is justified because weather conditions during different monsoon seasons are likely independent.

In the context of robust estimation for rainfall modeling, \cite{seo2017extreme} classify an observation as an outlier if it lies more than 1.5 interquartile ranges above the third quartile or below the first quartile. However, rainfall data are typically positively skewed, making the conventional boxplot rule prone to misclassifying extreme but genuine observations. The Adjusted-Boxplot method accounts for skewness by employing a robust measure of skewness that is itself resistant to outliers when determining the whisker lengths. Following \cite{hazra2024robust}, we identify outliers in the detrended annual monsoon rainfall totals using the widely adopted Adjusted-Boxplot procedure of \cite{hubert2008adjusted}, implemented in the \texttt{R} package \texttt{univOutl} \citep{univOutl}. For each meteorological subdivision, we compute the proportion of observations classified as outliers. The highest proportion (10.9\%) is observed in the Orissa and Chhattisgarh subdivisions, while no outliers are detected in 11 subdivisions. Among the remaining subdivisions, 11 contain one outlier, 6 contain two outliers, 2 contain four outliers, and 4 contain five outliers. These results indicate that outlying observations are not uncommon in the dataset and motivate the use of robust estimation methods such as the proposed MDPDE.

\begin{figure}[t]
    \centering
\includegraphics[width = 0.7\linewidth]{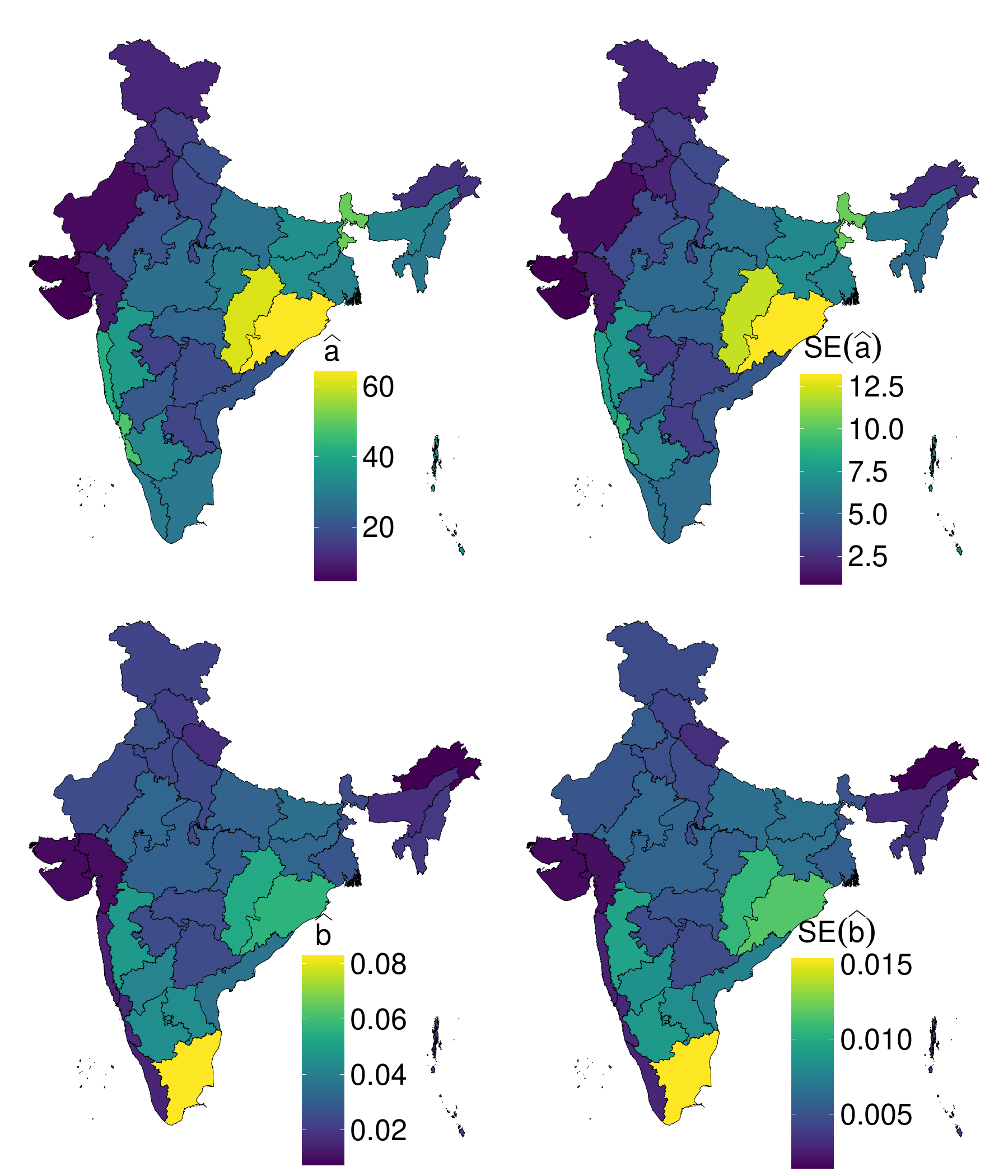}
    \caption{Meteorological subdivision-wise minimum density power divergence estimates of the gamma shape ($a$) and rate ($b$) parameters and their corresponding asymptotic standard errors calculated based on \eqref{eq:J_K_xi} and tuning parameter $\alpha$ selected according to \eqref{EQ:CVM}.}
    \label{mdpde_maps_a_b}
    \vspace{-4mm}
\end{figure}

Figure \ref{mdpde_maps_a_b} presents the MDPDE estimates of the gamma shape and rate parameters, together with their corresponding standard errors, for the 36 meteorological subdivisions of India. The tuning parameter of the MDPDE is selected separately for each subdivision using the data-driven procedure described in Section \ref{subsec:optimal_alpha}, yielding optimal values of $\alpha$ ranging from 0.127 to 0.499, which indicates that moderate robustness is preferred across most regions. The estimated shape parameter varies considerably, from 4.821 for Saurashtra, Kutch, and Diu to 64.198 for Orissa, with other high values observed for Chhattisgarh (60.971), Sub-Himalayan West Bengal and Sikkim (50.560), and Coastal Karnataka (47.585). In contrast, arid regions such as West Rajasthan (6.593) and Gujarat Region, Daman and Diu, and Nagar Haveli (8.870) exhibit the smallest shape parameter estimates. The estimated rate parameter follows a similar spatial pattern, ranging from 0.007 for Arunachal Pradesh to 0.083 for Tamil Nadu and Pondicherry. The standard errors also exhibit substantial spatial variation and generally increase with the corresponding parameter estimates. The standard error of the estimated shape parameter varies considerably, from 0.871 for Saurashtra, Kutch, and Diu to 13.199 for Orissa. The standard error of the estimated rate parameter also varies significantly, ranging from 0.001 for Arunachal Pradesh to 0.015 for Tamil Nadu and Pondicherry. When compared with the alternative maximum likelihood estimation, the MLEs of the shape and rate parameters exhibit a spatial pattern similar to that of MDPDE. However, the differences between the shape parameter estimates from MDPDE and MLE range from -6.544 to 18.115, and for the rate parameter, they range from -0.005 to 0.016. Despite differences in estimates, the median (across 36 subdivisions) asymptotic relative efficiency values for the two parameters are 0.840 and 0.850, respectively, indicating minimal compromise in estimation uncertainty.

\begin{figure}[!htbp]
    \centering
\includegraphics[width = 0.7\linewidth]{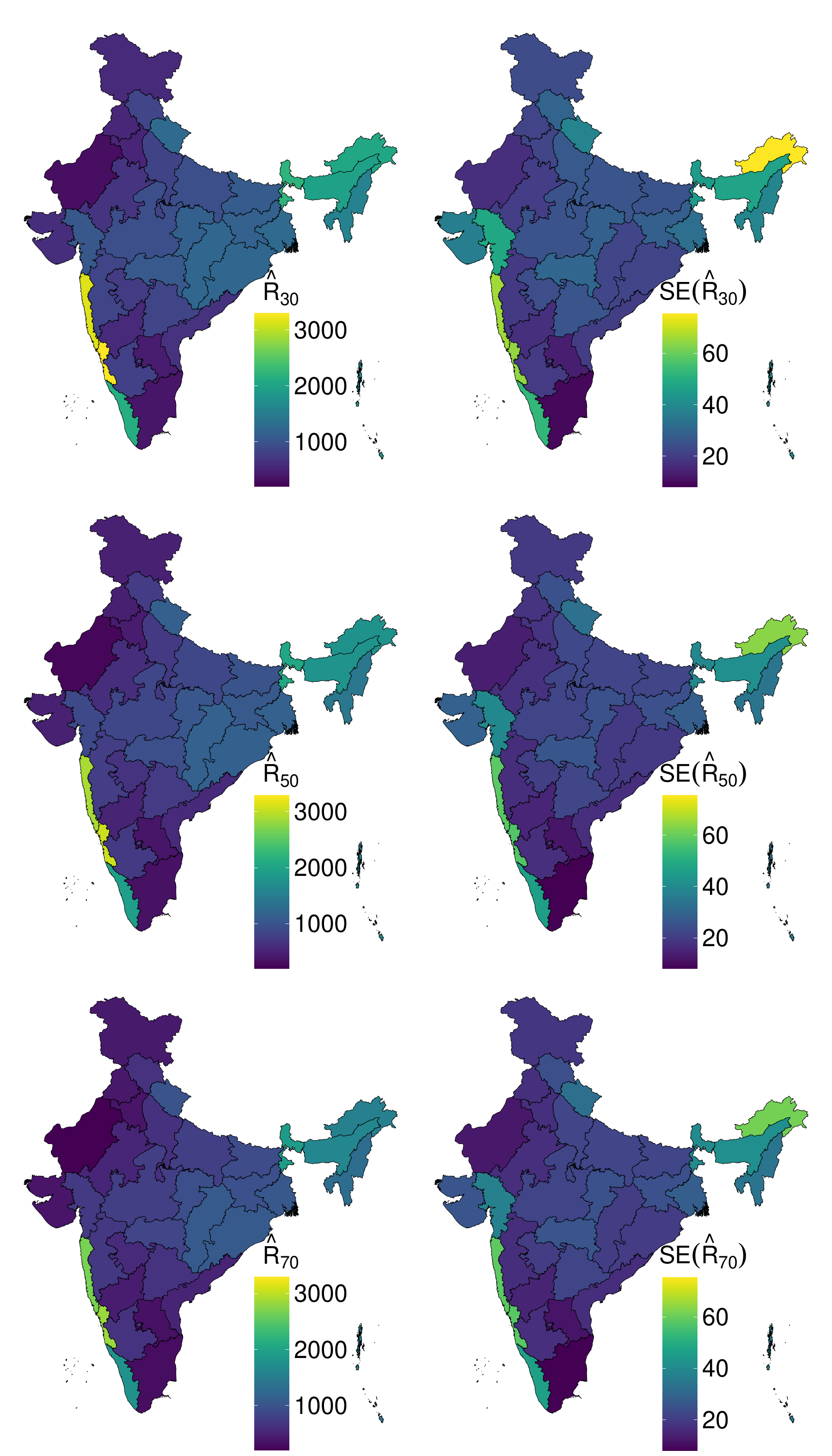}
    \caption{Meteorological subdivision-wise estimates (in the left column) for 30\%, 50\%, and 70\% probability rainfall amounts ($R_{30}$, $R_{50}$, and $R_{50}$, respectively) for the monsoon season (June--September), calculated based on minimum density power divergence estimates of the gamma shape ($a$) and rate ($b$) parameters. Their corresponding asymptotic standard errors (in the right column) are calculated using the delta method.}
    \label{mdpde_maps_probrain}
    \vspace{-4mm}
\end{figure}

Figure \ref{mdpde_maps_probrain} displays the estimated monsoon rainfall totals corresponding to 30\%, 50\%, and 70\% exceedance probabilities, defined as the $(1-q/100)$th quantile of the fitted gamma distribution for $q=30, 50$, and $70$, respectively. The corresponding standard errors are obtained using the delta method, where the gradients of the gamma quantile function with respect to $(a,b)$ are evaluated numerically using the \texttt{grad} function from the \texttt{numDeriv R} package. The estimated rainfall amounts exhibit pronounced spatial variability, with the highest values consistently observed for Coastal Karnataka (3287, 3049, and 2822 mm, respectively), followed by Konkan and Goa (3132, 2890, and 2660 mm), whereas the lowest values occur in West Rajasthan (311, 254, and 205 mm) and Tamil Nadu and Pondicherry (371, 337, and 304 mm). As expected, the estimated rainfall totals decrease monotonically as the exceedance probability increases from 30\% to 70\% across all subdivisions. The corresponding standard errors range from 8 mm to 75 mm across all probability levels and display a spatial pattern similar to that of the rainfall estimates, with the largest uncertainties occurring over the high-rainfall coastal and northeastern subdivisions and the smallest uncertainties over the relatively dry regions of western and southern India. The estimated rainfall amounts based on MLE show a spatial pattern similar to those calculated using MDPDE. However, the differences in estimated rainfall amounts using MDPDE and MLE vary across subdivisions, ranging from -61 mm to 21 mm for 30\% probability rainfall, from -46 mm to 47 mm for 50\% probability rainfall, and from -34 mm to 68 mm for 70\% probability rainfall. The ratio of the asymptotic variances of the estimated rainfall amounts using MDPDE and MLE remains close to one; across subdivisions, it ranges from 0.761 to 1.430 for the 30\% probability rainfall, from 0.775 to 1.364 for the 50\% probability rainfall, and from 0.804 to 1.346 for the 70\% probability rainfall. Unlike the asymptotic variance of the MDPDE estimates being always larger than that based on MLE, the asymptotic variance of the final rainfall estimates does not exhibit any specific ordering.

Together, these figures demonstrate that the proposed robust gamma model provides stable estimates of both distributional parameters and practically relevant rainfall quantiles, along with reliable measures of their estimation uncertainty.

\section{Discussions and conclusions}\label{sec:conclusion}

In this paper, we develop a robust estimation framework for the two-parameter gamma distribution based on the minimum density power divergence estimator (MDPDE). The proposed methodology provides a flexible alternative to the classical maximum likelihood estimator by introducing a tuning parameter that balances statistical efficiency under the assumed model and robustness against data contamination. We derive the estimating equations for the gamma distribution, establish the asymptotic distribution of the proposed estimator, obtain explicit expressions for its asymptotic covariance matrix, and investigate its robustness through influence function analysis. The asymptotic relative efficiency of the estimator is also studied for different choices of the tuning parameter, illustrating the expected trade-off between robustness and efficiency.

A comprehensive simulation study demonstrates that the proposed estimator performs competitively with several existing estimation methods, including maximum likelihood, method of moments, percentile, least squares, weighted least squares, and $L$-moment estimation. In the absence of contamination, the MDPDE incurs only a small loss of efficiency compared with maximum-likelihood estimation. However, as the proportion or severity of contamination increases, the robust estimator substantially outperforms the competing methods in both bias and mean squared error. The data-driven procedure for selecting the tuning parameter consistently produces estimators that maintain a desirable balance between robustness and efficiency across a wide range of contamination scenarios.

The methodology is further used to analyze total southwest monsoon rainfall data from the 36 meteorological subdivisions of India. After removing temporal trends, the proposed robust estimator is used to fit gamma distributions separately for each subdivision. The resulting parameter estimates and estimated rainfall quantiles exhibit considerable spatial heterogeneity, reflecting the diverse climatological characteristics across India. The estimated standard errors, obtained from the asymptotic covariance matrix and the delta method, provide useful measures of uncertainty for both the model parameters and practically relevant rainfall quantiles.

Although the proposed methodology focuses on the two-parameter gamma distribution, the underlying density power divergence framework is considerably more general. The derivations presented here may serve as a basis for developing robust estimation procedures for more flexible gamma-based models that frequently arise in hydrological and environmental applications \citep{chen2017entropy, ilinca2023flood}. Another natural extension is to incorporate covariate information through gamma regression models, where robust estimation could improve inference in the presence of anomalous observations. From a computational perspective, efficient implementations for large-scale spatial or spatiotemporal datasets also constitute an important direction for future research \citep{basu2026high, bhowmik2026bayesian}.

Overall, the proposed MDPDE provides a computationally feasible, theoretically justified, and practically effective robust estimation procedure for the gamma distribution. The combination of robustness, high efficiency under the assumed model, and a principled, data-driven tuning-parameter selection strategy makes it an attractive alternative to classical estimation methods for analyzing positively skewed data that may contain outliers.


\section*{Appendix}
\setcounter{theorem}{0}
\begin{theorem}
For the asymptotic covariance matrix $\bm{\Sigma}_{\widetilde{\bm{\theta}}}$ of the rescaled estimator $\widehat{\bm{\theta}}_{\alpha}$, the relationships between the elements of $\bm{\Sigma}_{\widetilde{\bm{\theta}}}$ and the rate parameter $b$ are given by: (i) $\bm{\Sigma}^{(1,1)}_{\widetilde{\bm{\theta}}}$ is independent of $b$, (ii) $\bm{\Sigma}^{(1,2)}_{\widetilde{\bm{\theta}}}\propto b$, and (iii) $\bm{\Sigma}^{(2,2)}_{\widetilde{\bm{\theta}}}\propto b^2$.
\end{theorem}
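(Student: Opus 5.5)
The plan is to exploit the explicit forms in \eqref{eq:J_K_xi_simple} and factor the $b$-dependence out of $\bm{J}_\alpha$ and $\bm{K}_\alpha$ through a diagonal scaling matrix. From \eqref{eq:notation_r_M}, $M_{\bm\theta,\alpha}=b^\alpha m_\alpha(a)$, where $m_\alpha(a)$ is free of $b$. Set
$$D_b=\mathrm{diag}(1,b^{-1}), \qquad c_\alpha(a)=\psi(r_{a,\alpha})-\psi(a)-\log(1+\alpha).$$
Then \eqref{eq:J_K_xi_simple} reads $\bm{J}_\alpha(\bm\theta)=b^\alpha D_b\,\bm{A}_\alpha(a)\,D_b$, where $\bm{A}_\alpha(a)$ is the symmetric matrix
$$\bm{A}_\alpha(a)=m_\alpha(a)\begin{pmatrix}\psi'(r_{a,\alpha})+c_\alpha^2 & \frac{\alpha c_\alpha-1}{1+\alpha}\\ \frac{\alpha c_\alpha-1}{1+\alpha} & \frac{r_{a,\alpha}+\alpha^2}{(1+\alpha)^2}\end{pmatrix},$$
which depends only on $(a,\alpha)$. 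Likewise $\bm{\xi}_\alpha(\bm\theta)=b^\alpha D_b\,\bm{v}_\alpha(a)$ with $\bm{v}_\alpha(a)=m_\alpha(a)\,\bigl(c_\alpha,\ \alpha/(1+\alpha)\bigr)^\top$.

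Next, replacing $\alpha$ by $2\alpha$ in the first factorization gives $\bm{J}_{2\alpha}=b^{2\alpha}D_b\bm{A}_{2\alpha}D_b$. Squaring the second gives $\bm{\xi}_\alpha\bm{\xi}_\alpha^\top=b^{2\alpha}D_b\bm{v}_\alpha\bm{v}_\alpha^\top D_b$. Subtracting, $\bm{K}_\alpha=b^{2\alpha}D_b\bm{B}_\alpha(a)D_b$ with $\bm{B}_\alpha=\bm{A}_{2\alpha}-\bm{v}_\alpha\bm{v}_\alpha^\top$, again free of $b$.

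Because $D_b$ is invertible, $\bm{J}_\alpha^{-1}=b^{-\alpha}D_b^{-1}\bm{A}_\alpha^{-1}D_b^{-1}$. Substituting into the sandwich formula, the inner $D_b$ factors cancel against $D_b^{-1}$ and the powers of $b$ cancel ($b^{2\alpha}\cdot b^{-2\alpha}$):
$$\bm{\Sigma}_{\widetilde{\bm\theta}}=D_b^{-1}\,\bm{L}(a,\alpha)\,D_b^{-1}, \qquad \bm{L}=\bm{A}_\alpha^{-1}\bm{B}_\alpha\bm{A}_\alpha^{-1}.$$
With $D_b^{-1}=\mathrm{diag}(1,b)$, reading off entries gives $\bm{\Sigma}^{(1,1)}=L_{1,1}$, $\bm{\Sigma}^{(1,2)}=b\,L_{1,2}$ and $\bm{\Sigma}^{(2,2)}=b^2L_{2,2}$. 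These are claims (i)--(iii), with proportionality constants depending only on $(a,\alpha)$.

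The main obstacle is bookkeeping rather than analysis: checking that every entry of \eqref{eq:J_K_xi_simple} carries exactly the power of $b$ that $D_b$ predicts. The $(1,1)$ entry has none beyond $b^\alpha$, the $(1,2)$ entry has $b^{-1}$, the $(2,2)$ entry has $b^{-2}$, and $\xi^{(2)}$ has $b^{-1}$. Two further points need stating. The $b^{\alpha}$ factors from $\bm{J}_\alpha^{-1}$ must match the $b^{2\alpha}$ in $\bm{K}_\alpha$, which they do. Invertibility of $\bm{A}_\alpha(a)$ follows from positive definiteness of $\bm{J}_\alpha$, under the standing conditions $r_{a,2\alpha}>0$.

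As an independent sanity check, the result can also be seen through scale equivariance. If $Y\sim\mathrm{Gamma}(a,b)$, then $bY\sim\mathrm{Gamma}(a,1)$. Since $f_\Gamma(y;a,b)=b\,f_\Gamma(by;a,1)$, the objective $H_{\alpha,n}$ at $(a,b)$ computed from the data $Y_i$ equals $b^\alpha$ times $H_{\alpha,n}$ at $(a,1)$ computed from the data $bY_i$. Hence $\widehat a_\alpha$ is scale invariant while $\widehat b_\alpha$ scales like $b$, which again yields the same scaling of the covariance entries.
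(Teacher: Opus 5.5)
Your proof is correct and takes the same route as the paper. The paper also factors $M_{\bm\theta,\alpha}\propto b^\alpha$ out of \eqref{eq:J_K_xi_simple}, writing the entries of $\bm{J}_\alpha$ and $\bm{K}_\alpha$ as $b^{\alpha},b^{\alpha-1},b^{\alpha-2}$ (resp.\ $b^{2\alpha},b^{2\alpha-1},b^{2\alpha-2}$) times functions of $(a,\alpha)$, and then appeals to ``straightforward algebra''; your congruences $\bm{J}_\alpha=b^\alpha D_b\bm{A}_\alpha D_b$ and $\bm{K}_\alpha=b^{2\alpha}D_b\bm{B}_\alpha D_b$ are that same structure written compactly, and they make the cancellation to $\bm{\Sigma}_{\widetilde{\bm\theta}}=D_b^{-1}\bm{L}D_b^{-1}$ explicit.
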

\begin{proof}
Using the expressions in \eqref{eq:J_K_xi_simple} together with \eqref{eq:J_K_xi} and the expression of $M_{\bm \theta, \alpha} \propto b^\alpha$ in \eqref{eq:notation_r_M}, the matrices $\bm{J}_\alpha(\bm{\theta})$ and $\bm{K}_\alpha(\bm{\theta})$ can be expressed as
\begin{equation}
\nonumber
\bm{J}_\alpha(\bm{\theta})
=
\begin{pmatrix}
b^{\alpha}J_{1,1}(\bm{\phi})
&
b^{\alpha-1}J_{1,2}(\bm{\phi})
\\[3mm]
b^{\alpha-1}J_{2,1}(\bm{\phi})
&
b^{\alpha-2}J_{2,2}(\bm{\phi})
\end{pmatrix},
~~
\bm{K}_\alpha(\bm{\theta})
=
\begin{pmatrix}
b^{2\alpha}K_{1,1}(\bm{\phi})
&
b^{2\alpha-1}K_{1,2}(\bm{\phi})
\\[3mm]
b^{2\alpha-1}K_{2,1}(\bm{\phi})
&
b^{2\alpha-2}K_{2,2}(\bm{\phi})
\end{pmatrix},
\end{equation}
where $J_{i,j}(\cdot)$ and $K_{i,j}(\cdot)$ are functions of $\bm{\phi}=(a,\alpha)^\top$ only and are therefore free of $b$. Straightforward algebra then shows that the elements of
$\bm{\Sigma}_{\widetilde{\bm{\theta}}}
~=~
\bm{J}_\alpha(\bm{\theta})^{-1}
\bm{K}_\alpha(\bm{\theta})
\bm{J}_\alpha(\bm{\theta})^{-1}
$ can be written as $\bm{\Sigma}^{(1,1)}_{\widetilde{\bm{\theta}}} =L_{1,1}(\bm{\phi})$, $\bm{\Sigma}^{(1,2)}_{\widetilde{\bm{\theta}}}
= b\,L_{1,2}(\bm{\phi})$, and $\bm{\Sigma}^{(2,2)}_{\widetilde{\bm{\theta}}} =
b^2L_{2,2}(\bm{\phi})$, where $L_{i,j}(\cdot)$ are functions of $\bm{\phi}$ that are free of $b$; this establishes the stated proportionality relations.
\end{proof}

\begin{theorem}\label{thm2}
The asymptotic relative efficiencies of both $\widehat a_\alpha$ and $\widehat b_\alpha$ are independent of the rate parameter $b$.
\vspace{-1mm}
\begin{proof}
From Theorem \ref{thm1}, the asymptotic variances satisfy
$\bm{\Sigma}^{(1,1)}_{\widetilde{\bm{\theta}}} = L_{1,1}(\bm{\phi})$ and $\bm{\Sigma}^{(2,2)}_{\widetilde{\bm{\theta}}}
= b^2 L_{2,2}(\bm{\phi})$, where $\bm{\phi}=(a,\alpha)'$, and $L_{1,1}(\cdot)$ and $L_{2,2}(\cdot)$ are free of $b$. Therefore, $\mathrm{ARE}(\widehat a_\alpha) = L_{1,1}((a,0)^\top) / L_{1,1}((a,\alpha)^\top)$, which is free of $b$. Likewise, $$\mathrm{ARE}(\widehat b_\alpha) = b^2L_{2,2}((a,0)^\top) / (b^2L_{2,2}((a,\alpha)^\top)) = L_{2,2}((a,0)^\top) / L_{2,2}((a,\alpha)^\top),$$ which is also free of $b$.
\end{proof}
\end{theorem}

\begin{theorem}\label{thm3}
Suppose the true parameter vector
$\bm{\theta}_{\ast}=(a_{\ast},b_{\ast})^\top$
is finite and satisfies $a_{\ast}> 1$, and thus $a_{\ast}> \alpha / (1 + \alpha)$ and all expressions in (\ref{eq:J_K_xi_simple}) are well defined. Then, each component of the influence function
$\mathrm{IF}\left(y;\bm{T}_\alpha, F_{\Gamma}(\cdot;\bm{\theta}_{\ast})\right)$ is bounded if and only if $\alpha>0$.
\end{theorem}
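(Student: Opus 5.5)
The plan is to work directly from the representation \eqref{eq:if_mdpde}. Here $\bm{J}_\alpha(\bm{\theta}_\ast)^{-1}$ and $\bm{\xi}_\alpha(\bm{\theta}_\ast)$ are constant in $y$. They are finite under $a_\ast>1$ by \eqref{eq:J_K_xi_simple}, and $\bm{J}_\alpha(\bm{\theta}_\ast)$ is invertible because it is the Gram matrix of the linearly independent functions $\log y - c$ and $a_\ast/b_\ast - y$ under the positive measure $f_\Gamma^{1+\alpha}(y;\bm{\theta}_\ast)\,dy$. Hence each component of the IF is a fixed nonsingular linear combination of the two functions
\[
h_1(y)=\bigl(\log b_\ast-\psi(a_\ast)+\log y\bigr)f_\Gamma^{\alpha}(y;\bm{\theta}_\ast),\qquad h_2(y)=\Bigl(\frac{a_\ast}{b_\ast}-y\Bigr)f_\Gamma^{\alpha}(y;\bm{\theta}_\ast),
\]
plus a constant. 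Boundedness of the IF components is therefore equivalent to boundedness of both $h_1$ and $h_2$ on $(0,\infty)$: bounded $h$'s give bounded IF, and conversely, applying the constant matrix $\bm{J}_\alpha(\bm{\theta}_\ast)$ to a bounded IF (plus $\bm{\xi}_\alpha$) recovers $(h_1,h_2)^\top$.

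\emph{Sufficiency ($\alpha>0$).} Write $f_\Gamma^\alpha(y;\bm{\theta}_\ast)=C\,y^{\alpha(a_\ast-1)}e^{-\alpha b_\ast y}$ with $C<\infty$. Both $h_1$ and $h_2$ are continuous on $(0,\infty)$, so it suffices to control the endpoints. As $y\to\infty$, the factor $e^{-\alpha b_\ast y}$ dominates any polynomial or logarithmic growth in $y$, so $h_1,h_2\to0$. As $y\to0^+$, since $a_\ast>1$ we have $\alpha(a_\ast-1)>0$, so $y^{\alpha(a_\ast-1)}\log y\to0$ and $y^{\alpha(a_\ast-1)}\to0$. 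Both functions therefore tend to $0$ at each end and are bounded.

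\emph{Necessity.} At $\alpha=0$ we get $h_1(y)=\log b_\ast-\psi(a_\ast)+\log y$, which is unbounded as $y\to0^+$ (and as $y\to\infty$), and $h_2(y)=a_\ast/b_\ast-y$, which is unbounded as $y\to\infty$. The remaining task is to show that each \emph{component} of the IF is unbounded, not merely the vector. At $\alpha=0$, $\bm{J}_0$ is the Fisher information, so the IF is $\bm{J}_0^{-1}(h_1,h_2)^\top$. Each row is $c_1h_1+c_2h_2$ with $(c_1,c_2)$ a row of an invertible matrix, hence not both zero. If $c_2\neq0$, the linear term $-c_2y$ dominates $c_1\log y$ as $y\to\infty$. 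If $c_2=0$, then $c_1\neq0$ and the $\log y$ term blows up as $y\to0^+$.

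In fact $\bm{J}_0^{-1}$ has entries proportional to $(a/b^2, 1/b;\ 1/b,\ \psi'(a))$, all nonzero, so both components contain the $-y$ term and diverge as $y\to\infty$. The main care point is this componentwise argument, together with justifying that the constants in \eqref{eq:J_K_xi_simple} are finite and that $\bm{J}_\alpha$ is nonsingular, which the Gram-matrix observation above supplies.
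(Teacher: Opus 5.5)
Your proposal is correct and follows essentially the same route as the paper. Both arguments reduce to boundedness of $\bm{u}_{\bm{\theta}_\ast}(y)f_\Gamma^\alpha(y;\bm{\theta}_\ast)$, since $\bm{J}_\alpha(\bm{\theta}_\ast)^{-1}$ and $\bm{\xi}_\alpha(\bm{\theta}_\ast)$ do not depend on $y$. Sufficiency comes from continuity together with vanishing limits at $0^+$ (where $a_\ast>1$ is needed) and at $\infty$, and necessity from the unbounded raw score at $\alpha=0$. Your Gram-matrix argument for invertibility of $\bm{J}_\alpha(\bm{\theta}_\ast)$ and your componentwise divergence at $\alpha=0$ make explicit steps the paper leaves implicit, although for the ``only if'' direction a single unbounded component already suffices.
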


\begin{proof}
The matrix $\bm{J}_\alpha(\bm{\theta}_{\ast})^{-1}$
and the vector $\bm{\xi}_{\alpha}(\bm{\theta}_\ast)$
are independent of the contamination point $y$. Consequently, the boundedness of the influence function is determined entirely by $$\bm{u}_{\bm{\theta}_{\ast}}(y) f_{\Gamma}^{\alpha}(y;\bm{\theta}_{\ast}) = \left(\frac{b_\ast^{a_\ast}}{\Gamma(a_\ast)}\right)^{\alpha} y^{\alpha(a_\ast-1)}\exp\{-\alpha b_\ast y\} ~
\nonumber \times \begin{pmatrix} 
\log b_\ast-\psi(a_\ast)+\log y \\
a_\ast/b_\ast-y
\end{pmatrix}
.$$

To establish the ``only if'' part, consider $\alpha=0$. In this case, the term $f_{\Gamma}^{\alpha}(y;\bm{\theta}_{\ast})$ equals one, and the score vector is
$\bm{u}_{\bm{\theta}_\ast}(y) = \left(\log b_\ast-\psi(a_\ast)+\log y,\; a_\ast/b_\ast-y
\right)^\top.$
The first component diverges to $-\infty$ as $y\rightarrow0$, whereas the first and second components diverge to $\infty$ and $-\infty$, respectively, as $y\rightarrow\infty$. Hence, the score vector is unbounded, implying that the influence function is unbounded as well.

For the ``if'' part, suppose $\alpha>0$. The two components of
$\bm{u}_{\bm{\theta}_{\ast}}(y)f_{\Gamma}^{\alpha}(y;\bm{\theta}_{\ast})$ are continuous on $(0,\infty)$. Over any compact interval, they are therefore bounded by the extreme value theorem. Furthermore, $\lim_{y\rightarrow0^+} \bm{u}_{\bm{\theta}_{\ast}}(y) f_{\Gamma}^{\alpha}(y;\bm{\theta}_{\ast}) = \bm{0},$
provided $a_{\ast}>1$, since the polynomial decay of
$y^{\alpha(a_{\ast}-1)}$
dominates the logarithmic divergence of the first component of $\bm{u}_{\bm{\theta}_{\ast}}(y)$. Likewise,
$\lim_{y\rightarrow\infty}\bm{u}_{\bm{\theta}_{\ast}}(y)
f_{\Gamma}^{\alpha}(y;\bm{\theta}_{\ast}) =\bm{0},$
because the exponential factor $\exp(-\alpha b_{\ast}y)$
dominates the linear or logarithmic growth of the score vector components or the polynomial growth of $y^{\alpha(a_{\ast}-1)}$. Consequently,
$\bm{u}_{\bm{\theta}_{\ast}}(y)
f_{\Gamma}^{\alpha}(y;\bm{\theta}_{\ast})$
is bounded on $(0,\infty)$, establishing the boundedness of the influence function.
\end{proof}

\section*{Data and code availability}
The \texttt{R} code and data required to reproduce the figures and tables are accessible at the following GitHub repository: \url{https://github.com/arnabstatswithR/robustgamma}.

\section*{Conflict of interest}
The authors declare no competing interests.

\section*{Acknowledgement}
This work is supported by the Indian Institute of Technology Kanpur Initiation grant under Grant No. IITK/MATH/2021340. The author acknowledges the use of ChatGPT (OpenAI) and Grammarly for assistance with grammar correction, language refinement, and sentence restructuring. The scientific content, methodology, analysis, results, and conclusions presented in this paper are solely the responsibility of the author.

\bibliographystyle{plainnat} 
\bibliography{references}          
\end{document}